%% file: main.tex
\documentclass[journal]{IEEEtran}

\usepackage{amsmath,amssymb,amsfonts}
\usepackage{bm}
\usepackage{booktabs}
\usepackage{graphicx}
\usepackage{tikz}
\usetikzlibrary{arrows.meta,positioning,fit,backgrounds}
\usepackage{cite}
\usepackage{algorithm}
\usepackage{algpseudocode}
\usepackage{array}
\usepackage{url}

\newtheorem{proposition}{Proposition}

\newcommand{\ind}{\mathbb{I}}
\newcommand{\cset}{\mathcal{C}}
\newcommand{\fset}{\mathcal{F}}
\newcommand{\rset}{\mathcal{R}}
\newcommand{\pos}[1]{\left[#1\right]_{+}}

\begin{document}

\title{SPLIT-Q: A Scalable Sequential Quantum Computing Framework for Coherent Controlled Islanding}

\author{Yuqi~Jiang,~\IEEEmembership{Graduate Student Member,~IEEE},
        Zhiding~Liang,~\IEEEmembership{Member,~IEEE},
        Qiang~Guan,~\IEEEmembership{Senior Member,~IEEE},
        Yan~Li,~\IEEEmembership{Senior Member,~IEEE},
        and~Ganesh~Kumar~Venayagamoorthy,~\IEEEmembership{Fellow,~IEEE}%
\thanks{This work is supported by the Office of Naval Research under award
N00014-22-1-2504 and the National Science Foundation under awards OAC-2417773
and ECCS-2413237/2413238.}%
\thanks{Y. Jiang and Y. Li are with the Department of Electrical Engineering,
The Pennsylvania State University, University Park, PA 16802, USA.}%
\thanks{Z. Liang is with the Department of Computer Science,
Rensselaer Polytechnic Institute, Troy, NY 12180, USA.}%
\thanks{Q. Guan is with the Department of Computer Science,
Kent State University, Kent, OH 44242, USA.}%
\thanks{G. K. Venayagamoorthy is with the Real-Time Power and Intelligent
Systems Laboratory, Holcombe Department of Electrical and Computer Engineering,
Clemson University, Clemson, SC 29634, USA, and with the University of Pretoria,
Pretoria, South Africa.}%
\thanks{Corresponding author: Yan Li (e-mail: yql5925@psu.edu).}}

\pagestyle{plain}

\maketitle

\begin{abstract}
Growing integration of distributed energy resources increases power-system
variability and uncertainty. During disturbances, these effects can intensify
generation--load imbalances and cascading failures.
Controlled islanding limits
their propagation by partitioning a compromised grid into connected,
electrically sustainable islands. However, classical methods face rapidly
growing computational costs as network size and island count increase. Quantum
optimization offers an alternative for exploring this combinatorial partition
space. Yet monolithic quantum formulations encode all assignment decisions in
one circuit, causing qubit demand and circuit complexity to scale with network
size. In this study, a qubit-bounded sequential distributed quantum
approximate optimization algorithm (QAOA) framework is proposed to tackle
coherent controlled islanding under limited quantum resources. It formulates
the optimization as boundary-conditioned regional quadratic unconstrained
binary optimization (QUBO) subproblems that are solved sequentially within a
fixed qubit budget. Thus, circuit width remains independent of network size,
with aggregate quantum workload scaling linearly on bounded-degree networks.
Evaluation covers eleven IEEE systems from 9 to 300 buses using IBM quantum
computing resources, with Gurobi and monolithic QAOA as references. Across all
systems, the framework recovers feasible Gurobi-optimal partitions under
noise, confirming the resilience of its solution quality. The results further
show that the proposed method
substantially reduces quantum-resource demand and circuit complexity relative
to monolithic QAOA, allowing large islanding problems to be addressed within
current hardware limits.
The proposed framework provides a feasible and scalable pathway for quantum
optimization in large-scale power systems.
\end{abstract}

\begin{IEEEkeywords}
Controlled islanding, sequential optimization,
quantum approximate optimization algorithm (QAOA), quantum noise resilience.
\end{IEEEkeywords}

\input{introduction}
\input{problem_formulation}
\input{proposed_method}
\input{numerical_examples}
\input{conclusion}

\bibliographystyle{IEEEtran}
\bibliography{References}

\end{document}

%% file: introduction.tex
\section{Introduction}
\label{sec:introduction}

The growing penetration of distributed energy resources, particularly
weather-dependent wind and solar generation, increases operating uncertainty and
can aggravate generation--load imbalances during disturbances
\cite{chen2022reinforcement,ren2022interpretable}. These conditions increase
the likelihood that local outages develop into cascading failures and widespread
blackouts, motivating effective preventive operations
\cite{ren2022interpretable,vaiman2012risk}.

Controlled islanding is a key emergency operation that contains cascading
failures by opening selected transmission lines to separate a compromised network
into connected, electrically sustainable islands. It preserves critical
generation and load while supporting subsequent restoration
\cite{kyriacou2018controlled}. The resulting optimization seeks a
low-disruption partition into viable, self-sufficient islands, but its candidate
space grows exponentially with network size and each partition must satisfy
multiple coupled operational requirements.

Classical optimization has provided the principal computational foundation for
controlled islanding, yielding methods that balance physical fidelity, search
efficiency, and scalability.
Slow-coherency-based partitioning groups generators by electromechanical
dynamics before assigning the remaining buses
\cite{you2004slow,yang2006slowcoherency}, while ordered-binary-decision-diagram
and decision-tree methods improve candidate-splitting search
\cite{ahmed2003scheme,sun2003splitting,senroy2006decisiontree}. Spectral
graph-partitioning formulations recast islanding as a weighted-cut problem
\cite{ding2013twostep,kyriacou2018controlled}, and mixed-integer formulations
directly optimize operational and stability constraints
\cite{patsakis2019strong,teymouri2019toward,ma2023controlled}. Despite these
advances, reliance on heuristic search, iterative relaxation, or
problem-specific decomposition limits scalability as network size and the
number of islands increase.

Quantum computing offers an alternative paradigm for combinatorial
optimization. By exploiting superposition and entanglement, quantum processors
can, in principle, encode and manipulate exponentially large solution spaces
with a linear number of qubits, motivating their use for classically difficult
problems. Variational gate-based algorithms and quantum annealing are leading
near-term strategies because they target discrete partitioning and assignment
structures under current hardware limits on qubit count and noise.

These capabilities have motivated a growing body of work applying quantum
computing across a wide range of domains, including quantum chemistry
simulation \cite{peruzzo2014variational}, machine learning
\cite{havlicek2019supervised}, financial portfolio and risk optimization
\cite{egger2020quantum}, sensor placement and contingency analysis
\cite{jiang2025pmu,jiang2025contingency}, combinatorial scheduling
\cite{nikmehr2022quantum,koretsky2021adapting}, and computational-advantage
demonstrations on near-term hardware \cite{arute2019quantum,zhong2020quantum}.
Collectively, these results show quantum optimization evolving from theory into
a practically validated tool across diverse domains.

Within this broader effort, quantum algorithms have recently been applied to
controlled islanding specifically. Quantum-annealing formulations have been used
to partition power grids by mapping the assignment problem onto
hardware-embedded qubits \cite{hartmann2025quantum}. Gate-based approaches have
also been explored: prior work has proposed a resource-efficient hybrid quantum
approximate optimization algorithm (QAOA) framework for physics-constrained
islanding \cite{jiang2026regrid} and a
physics-constrained QAOA formulation that further reduces the qubit footprint
required to encode bus-assignment variables \cite{jiang2026pace}.

Classical-postprocessing feeding strategies \cite{jiang2026regrid} and
Lagrangian-based, physics-constrained encodings \cite{jiang2026pace} reduce
qubit requirements and improve solution quality. Nevertheless, monolithic
QAOA still couples all bus assignments in one register, whose width grows with
network size and island count. Under current noisy intermediate-scale
quantum (NISQ) hardware, qubits remain a scarce and error-prone resource, so
this growth limits the instances that can be represented under a fixed qubit
budget regardless of how efficiently each qubit is used. This tension motivates
decomposing the monolithic search itself into a sequence of smaller,
bounded-width subproblems rather than continuing to compress a single
fixed-width encoding. Accordingly, this work advances scalable quantum
controlled islanding through sequential regional optimization and full-network
coordination, with the following contributions:
\begin{itemize}
\item This work proposes a qubit-bounded sequential regional QAOA framework for
coherent controlled islanding that decouples the simultaneous qubit requirement
from network size, providing a scalable route to large islanding problems on
fixed-capacity quantum hardware.
\item An analytical complexity characterization establishing a
network-size-independent circuit width, an aggregate quantum workload growing
linearly on bounded-degree networks, and full-network feasibility evaluation as
the dominant large-scale cost on sparse grids.
\item Cross-platform validation on eleven IEEE systems from 9 to 300 buses,
where feasible Gurobi-optimal cuts are preserved across ideal, calibrated-noise,
and IBM hardware sampling, evidencing noise resilience. The attendant reductions
in qubit demand and compiled circuit complexity relative to monolithic QAOA make
large controlled-islanding instances executable on hardware that their
monolithic formulations exceed.
\end{itemize}


%% file: problem_formulation.tex
\section{Problem Formulation}
\label{sec:formulation}

Intentional controlled islanding separates a stressed transmission network into
coherent, connected subsystems while limiting the power transfer interrupted by
opened branches. This requirement leads to a constrained weighted graph-partitioning
problem.

\subsection{Network Model and Island-Assignment Variables}

The power grid is an undirected weighted graph
$\mathcal{G}=(\mathcal{V},\mathcal{E})$, assumed connected, with bus set
$\mathcal{V}$, branch set $\mathcal{E}$, and $N=|\mathcal{V}|$. The network is
partitioned into $K\ge2$ islands indexed by $\mathcal{K}=\{0,\ldots,K-1\}$.
Each branch $\{i,j\}\in\mathcal{E}$ is assigned a disruption weight based on
the average magnitude of its pre-islanding
directional active-power transfers $P_{ij}^{0}$ and $P_{ji}^{0}$ (in the units
of the input power-flow data)~\cite{jiang2026pace}:
\begin{equation}
w_{ij}=\frac{|P_{ij}^{0}|+|P_{ji}^{0}|}{2},
\qquad \{i,j\}\in\mathcal{E}.
\label{eq:branch_weight}
\end{equation}
This symmetric weight measures the power transfer interrupted by opening the
branch, so cutting a heavily loaded interface incurs a larger objective cost.
Consequently, the disruption objective retains the unit of the supplied
branch-flow data and is reported in MW for all numerical test cases.

Generator and load buses are denoted
$\mathcal{V}_{\mathrm{G}},\mathcal{V}_{\mathrm{L}}\subseteq\mathcal{V}$. For every bus $i\in\mathcal{V}$ and island
$g\in\mathcal{K}$, the binary membership variable
\begin{equation}
y_{ig} =
\begin{cases}
1, & \text{if bus $i$ belongs to island $g$},\\
0, & \text{otherwise}
\end{cases}.
\label{eq:membership}
\end{equation}
Collecting these variables gives
$\bm y=(y_{ig})_{i\in\mathcal V,g\in\mathcal K}$. A complete label vector
$\bm{x}\in\mathcal{K}^{N}$ induces this membership through
$y_{ig}(\bm{x})=\ind\{x_i=g\}$, where $\ind\{\cdot\}$ denotes the indicator
function. This physical membership model is distinct from
the quantum assignment encoding introduced in Section~\ref{sec:method}.

A fixed collection of coherent generator groups
$\mathcal{H}=\{\mathcal{H}_0,\ldots,\mathcal{H}_{K-1}\}$ satisfies
$\varnothing\neq\mathcal{H}_g\subseteq\mathcal{V}_{\mathrm{G}}$ and
$\mathcal{H}_a\cap\mathcal{H}_b=\varnothing$ for $a\neq b$. Because island
labels are interchangeable, this indexing carries no physical significance
until Section~\ref{sec:method} associates $\mathcal{H}_g$ with island $g$
without excluding any physically distinct partition. These assignments
determine which branches a partition opens and therefore its weighted
disruption cost.

\subsection{Weighted Network-Disruption Objective}

An effective partition preserves important transmission corridors whenever the
operational constraints permit. Its disruption cost is the total weight of the
branches opened by the partition,
$C(\bm{x})=\sum_{\{i,j\}\in\mathcal{E}} w_{ij}\ind\{x_i\neq x_j\}$.
Under a valid membership assignment,
\begin{equation}
\ind\{x_i\neq x_j\}
=1-\sum_{g\in\mathcal{K}}y_{ig}y_{jg}.
\label{eq:cut_identity}
\end{equation}
Consequently, the equivalent binary quadratic objective is
\begin{equation}
C(\bm{y}) =
\sum_{\{i,j\}\in\mathcal{E}}w_{ij}
\left(1-\sum_{g\in\mathcal{K}}y_{ig}y_{jg}\right).
\label{eq:qubo_cut}
\end{equation}
The total branch weight is fixed by the pre-islanding operating point. Within the
feasible set, minimizing $C(\bm{y})$ therefore favors partitions that preserve
high-weight transmission interfaces inside the islands. This benefit requires
every island to remain structurally coherent and adequately resourced.

\subsection{Controlled-Islanding Constraints}

A viable islanding solution must partition the entire network into
operationally and structurally sound subsystems. The following constraints enforce these requirements:

\subsubsection{Unique bus assignment}

Every bus is assigned to exactly one island, so
$\sum_{g\in\mathcal{K}}y_{ig}=1$ for every $i\in\mathcal{V}$.

\subsubsection{Minimum island size and resource presence}

Let $M_{\mathrm{V}}$, $M_{\mathrm{G}}$, and $M_{\mathrm{L}}$ denote the
required minimum numbers of buses, generator buses, and load buses in every
island:
\begin{subequations}
\label{eq:cardinality_constraints}
\begin{align}
\sum_{i\in\mathcal{V}}y_{ig} &\ge M_{\mathrm{V}},
&&\forall g\in\mathcal{K}, \label{eq:min_buses}\\
\sum_{i\in\mathcal{V}_{\mathrm{G}}}y_{ig} &\ge M_{\mathrm{G}},
&&\forall g\in\mathcal{K}, \label{eq:min_generators}\\
\sum_{i\in\mathcal{V}_{\mathrm{L}}}y_{ig} &\ge M_{\mathrm{L}},
&&\forall g\in\mathcal{K}. \label{eq:min_loads}
\end{align}
\end{subequations}
These bounds prevent empty or resource-deficient islands and can be raised
above one to require additional redundancy.

\subsubsection{Generator coherency and group separation}

The coherent groups in $\mathcal H$ become fixed island anchors and all generators in the same coherent group remain
together,
\begin{equation}
y_{ig}=y_{jg},
\quad
\forall i,j\in\mathcal{H}_a,\
\forall a\in\mathcal{K},\
\forall g\in\mathcal{K},
\label{eq:coherency}
\end{equation}
while distinct coherent groups are separated into distinct islands:
\begin{equation}
\sum_{g\in\mathcal{K}}y_{ig}y_{jg}=0,
\quad
\substack{\forall i\in\mathcal{H}_a,\ j\in\mathcal{H}_b,\\
a,b\in\mathcal{K},\ a\neq b.}
\label{eq:separation}
\end{equation}
Because the number of coherent groups equals the number of islands,
\eqref{eq:coherency}--\eqref{eq:separation} place exactly one coherent group in
each island.

\subsubsection{Topological connectivity}

For $\mathcal B\subseteq\mathcal V$, let $\mathcal G[\mathcal B]$ denote the
subgraph of $\mathcal G$ induced by $\mathcal B$. For each $g\in\mathcal{K}$, let
\begin{equation}
\mathcal{B}_g(\bm{y})=\{i\in\mathcal{V}:y_{ig}=1\},
\qquad
\mathcal{G}_g(\bm y)=\mathcal{G}[\mathcal{B}_g(\bm y)]
\label{eq:induced_subgraph}
\end{equation}
denote its bus set and induced subgraph. Each island must form a contiguous
electrical subsystem, so $\mathcal{G}_g(\bm y)$ must be connected for every
$g\in\mathcal{K}$.
Letting $\kappa_g(\bm{y})$ denote the number of connected components of
$\mathcal{G}_g(\bm y)$, with $\kappa_g=0$ for an empty island, define the
full-network connectivity violation as
$r^{\mathrm{con}}_g(\bm{y})=\ind\{\kappa_g(\bm{y})\neq 1\}$.
This condition is checked after measurement: regional labels are merged into a
complete network assignment, and a depth-first search (DFS) on every induced
subgraph $\mathcal{G}_g(\bm y)$ must visit every member of every island for the
assignment to be feasible.

Together with the disruption objective, these feasibility conditions
define the original constrained optimization problem.

\subsection{Compact Constrained Optimization Model}

Controlled islanding is thus a constrained partitioning problem in which reduced
disruption cannot be traded for physical inadmissibility. Combining the preceding
objective and constraints gives
\begin{subequations}
\label{eq:full_model}
\begin{align}
C^\star = \min_{\bm{y}}\quad
&\sum_{\{i,j\}\in\mathcal{E}}w_{ij}
\left(1-\sum_{g\in\mathcal{K}}y_{ig}y_{jg}\right)
\label{eq:full_obj}\\
\text{s.t.}\quad
&\sum_{g\in\mathcal K}y_{ig}=1,\quad \forall i\in\mathcal V,\notag\\
&\eqref{eq:cardinality_constraints},\
\eqref{eq:coherency},\ \eqref{eq:separation},\
r^{\mathrm{con}}_g(\bm y)=0,\quad \forall g\in\mathcal K,
\label{eq:full_constraints}\\
&y_{ig}\in\{0,1\},
\qquad \forall i\in\mathcal{V},\ g\in\mathcal{K}.
\label{eq:binary_domain}
\end{align}
\end{subequations}
The problem is assumed to have at least one feasible assignment. Let
\begin{equation}
\fset=\left\{\bm{x}\in\mathcal{K}^{N}:\bm{y}(\bm{x})
\text{ satisfies \eqref{eq:full_constraints}}\right\}
\label{eq:label_feasible_set}
\end{equation}
denote the feasible set of complete label vectors. The quantum subproblems
developed next generate regional candidates, but feasibility and final solution
quality are always defined with respect to $\fset$ and the physical objective
$C(\bm{x})$.

When a regional candidate pool contains no feasible point, binary feasibility alone
does not distinguish the degree of violation. Residuals provide this secondary
ordering without changing the original feasible objective.

\subsection{Residuals and Feasibility-First Candidate Ranking}

Regional optimization can temporarily produce candidates outside the physical
feasible set. When no feasible candidate is available, smaller aggregate
violations provide a principled secondary ordering. With
$[a]_+=\max\{a,0\}$, the global inequality residuals are defined as
\begin{subequations}
\label{eq:residuals}
\begin{align}
r^{\mathrm{V}}_g(\bm{y})
&=\pos{M_{\mathrm{V}}-\sum_{i\in\mathcal{V}}y_{ig}},\\
r^{\mathrm{G}}_g(\bm{y})
&=\pos{M_{\mathrm{G}}-\sum_{i\in\mathcal{V}_{\mathrm{G}}}y_{ig}},\\
r^{\mathrm{L}}_g(\bm{y})
&=\pos{M_{\mathrm{L}}-\sum_{i\in\mathcal{V}_{\mathrm{L}}}y_{ig}}.
\end{align}
\end{subequations}
The remaining structural residuals are
\begin{subequations}
\label{eq:structural_residuals}
\begin{align}
r^{\mathrm{coh}}_a(\bm{y})
&=\ind\left\{
\substack{\exists i,j\in\mathcal{H}_a,\ g\in\mathcal{K}:\\
y_{ig}\neq y_{jg}}
\right\},\\
r^{\mathrm{sep}}_{ab}(\bm{y})
&=\ind\left\{
\substack{\exists i\in\mathcal H_a,\ j\in\mathcal H_b,\ g\in\mathcal K:\\
y_{ig}y_{jg}=1}
\right\},
\quad 0\le a<b\le K-1.
\end{align}
\end{subequations}
The residuals in \eqref{eq:residuals}, together with
$r^{\mathrm{con}}_g$, vanish exactly when a complete label assignment lies in
$\fset$. Define the positive residual-weight vector
$\bm\rho=(\rho_{\mathrm V},\rho_{\mathrm G},\rho_{\mathrm L},
\rho_{\mathrm{con}},\rho_{\mathrm{coh}},\rho_{\mathrm{sep}})$. The associated
penalized score is
\begin{subequations}
\label{eq:penalized_score}
\begin{align}
\Phi(\bm{y})={}&
\sum_{g\in\mathcal{K}}
\left(
\rho_{\mathrm{V}}r^{\mathrm{V}}_g
+\rho_{\mathrm{G}}r^{\mathrm{G}}_g
+\rho_{\mathrm{L}}r^{\mathrm{L}}_g
+\rho_{\mathrm{con}}r^{\mathrm{con}}_g
\right)\\
&+\rho_{\mathrm{coh}}\sum_{a\in\mathcal{K}}r^{\mathrm{coh}}_a
+\rho_{\mathrm{sep}}
\sum_{0\le a<b\le K-1}r^{\mathrm{sep}}_{ab},\\
S(\bm{y})={}&C(\bm{y})+\Phi(\bm{y}).
\end{align}
\end{subequations}
Here, every residual on the right-hand side is evaluated at $\bm y$. For a
complete label vector, $C(\bm{x})$, $\Phi(\bm{x})$, and $S(\bm{x})$ denote
$C(\bm{y}(\bm{x}))$, $\Phi(\bm{y}(\bm{x}))$, and
$S(\bm{y}(\bm{x}))$, respectively.
The penalty in \eqref{eq:penalized_score} is used only to rank fallback candidates
when a candidate pool contains no feasible assignment. Feasible candidates always
dominate infeasible candidates and, among feasible candidates, selection is made
using the unmodified physical objective $C$.

The regional solver therefore minimizes the physical cut while retaining this
feasibility-first ordering for candidate selection.

%% file: proposed_method.tex
\section{Proposed Sequential Distributed Quantum--Classical Method}
\label{sec:method}

The proposed method limits simultaneous quantum width to a prescribed budget
$Q_{\max}$ by optimizing only a bounded subset of free-bus assignments at each
update. Fixed coherent anchors and a
complete classical assignment provide the boundary information needed to coordinate
the regional quadratic unconstrained binary optimization (QUBO) searches.
Fig.~\ref{fig:workflow} summarizes the resulting workflow, where
$\mathcal V_{\mathrm F}$ denotes the buses outside the coherent anchors, $KN$ the
conventional one-hot assignment width, and $(K-1)|\mathcal V_{\mathrm F}|$ the
symmetry-reduced monolithic width. Only the sampling stage runs on quantum
hardware, and its width is capped by construction, so growth in network size is
absorbed by the number of regional updates and the classical coordination around
them.

\begin{figure*}[t]
\centering
\resizebox{\textwidth}{!}{\input{figures/workflow}}
\caption{Workflow of the proposed method. Preprocessing anchors the coherent
groups, encodes only the free buses, and partitions them into $N_{\mathrm R}$
regions of at most $n_{\max}$ buses, so that every register size
$Q_m\le Q_{\max}$. Each round optimizes the active region $\mathcal R_m$ but
evaluates its candidates on the full network, with projected multipliers
carrying global violations forward without slack qubits. Circled numbers give
the execution order within one update round, crossing the classical and quantum
lanes twice.
Here $p$ is the QAOA depth, $t$ indexes coordination sweeps, unsubscripted $H$
boxes are Hadamard gates, and the cost and mixer Hamiltonians
$\widehat H_{C,m}$, $\widehat H_{M,m}$ are abbreviated $H_C$, $H_M$.}
\label{fig:workflow}
\end{figure*}

\subsection{Compact Assignment Encoding}

Interchangeable island labels create redundant representations of the same physical
partition, which the proposed method resolves through a compact canonical
representation of assignments~\cite{jiang2026pace}. For $K\ge2$ nonempty,
mutually disjoint coherent groups, the island labels are chosen canonically so
that the memberships of the anchor buses are fixed as
\begin{equation}
y_{ig}=1,\qquad
y_{ih}=0,\quad
\forall g\in\mathcal K,\ \forall i\in\mathcal{H}_g,\
\forall h\in\mathcal K\setminus\{g\}.
\label{eq:anchor_fixing}
\end{equation}
Let
\begin{equation}
\mathcal{V}_{\mathrm{A}}=\bigcup_{g\in\mathcal{K}}\mathcal{H}_g,
\qquad
\mathcal{V}_{\mathrm{F}}=\mathcal{V}\setminus\mathcal{V}_{\mathrm{A}}
\label{eq:anchor_free_sets}
\end{equation}
denote the anchor and free bus sets.

Since the anchor memberships are already fixed by~\eqref{eq:anchor_fixing}, only
the free buses in $\mathcal{V}_{\mathrm{F}}$ require encoding. For each
$i\in\mathcal{V}_{\mathrm{F}}$, the first $K-1$ island labels use the binary
variables
\begin{equation}
z_{ig}\in\{0,1\},\qquad
i\in\mathcal{V}_{\mathrm{F}},\quad g=0,\ldots,K-2.
\label{eq:symmetry_bits}
\end{equation}
The unified membership indicator induced by the symmetry encoding is
\begin{equation}
\widehat y_{ig}(\bm z)=
\begin{cases}
\ind\{g=a\}, & i\in\mathcal{H}_a,\\
z_{ig}, & i\in\mathcal{V}_{\mathrm{F}},\quad 0\le g\le K-2,\\
1-\displaystyle\sum_{h=0}^{K-2}z_{ih},
& i\in\mathcal{V}_{\mathrm{F}},\quad g=K-1.
\end{cases}
\label{eq:symmetry_encoding}
\end{equation}
The all-zero word represents island $K-1$, and the membership indicators sum to
one by construction. A free-bus word is valid when
\begin{equation}
\sum_{g=0}^{K-2}z_{ig}\le 1,
\qquad i\in\mathcal{V}_{\mathrm{F}}.
\label{eq:symmetry_validity}
\end{equation}
For every valid word, $\widehat y_{ig}$ equals the physical membership variable
$y_{ig}$ of the decoded assignment and can therefore be substituted directly
into the model of Section~\ref{sec:formulation}.
The regional QUBO penalizes violations of this condition with
\begin{equation}
\Psi_{\mathrm{enc}}(\bm{z})
=\lambda_{\mathrm{enc}}
\sum_{i\in\mathcal{V}_{\mathrm{F}}}
\sum_{0\le g<h\le K-2}z_{ig}z_{ih}.
\label{eq:symmetry_penalty}
\end{equation}
Here, $\lambda_{\mathrm{enc}}>0$. The penalty is zero for every valid word and
positive whenever two or more explicit bits of a bus are active. Such invalid
measured words are also rejected during decoding.

Because every $\widehat y_{ig}$ is constant or affine in $\bm z$, each linear
membership term remains linear and each pairwise cut term remains at most
quadratic after applying $z_{ig}^{2}=z_{ig}$. The proof that anchor fixing
preserves every physical partition and the complete encoding derivation are given
in~\cite{jiang2026pace}. The assignment-register size needed here is
\begin{equation}
Q_{\mathrm{sym}}=(K-1)|\mathcal{V}_{\mathrm{F}}|
=(K-1)(N-|\mathcal{V}_{\mathrm{A}}|)
\label{eq:symmetry_qubit_count}
\end{equation}
for the symmetry-reduced monolithic representation. For comparison, conventional
one-hot encoding uses $KN$ assignment qubits. The sequential implementation instantiates only
the variables of the active region. Hereafter, the argument $\bm z$ is omitted
from $\widehat y_{ig}$ when clear from context.

Because the number of active variables scales with the number of encoded free
buses, this register size directly determines the admissible regional width.

\subsection{Graph Initialization and Qubit-Bounded Regionalization}

A regional quantum search requires a complete set of external labels and a
decomposition consistent with the qubit limit, shown as the regions panel of
Fig.~\ref{fig:workflow}. A globally complete initial assignment is constructed
by a multi-source shortest-path (MSSP) expansion from the fixed coherent
anchors. Let $w_0=1$ MW be the reference power. The dimensionless length
assigned to branch $\{i,j\}$ is
\begin{equation}
\ell_{ij}=1+\frac{w_0}
{\max(w_{ij},\epsilon_w)},
\label{eq:initialization_length}
\end{equation}
where $\epsilon_w>0$ is a numerical floor expressed in the same power unit as
$w_{ij}$. Each anchor
group starts with distance zero and its associated island label. The label reaching
a free bus with the smallest accumulated length becomes its initial label, with
ties resolved by the smaller island index. The unit term favors topological
proximity, while the inverse-weight term weakly favors paths formed by high-weight
branches.

Because the compact encoding~\cite{jiang2026pace} uses $K-1$ qubits per free bus, let the maximum
simultaneous assignment width satisfy $Q_{\max}\ge K-1$. The resulting regional
bus limit is
\begin{equation}
n_{\max}=\left\lfloor\frac{Q_{\max}}{K-1}\right\rfloor\ge1.
\label{eq:regional_bus_limit}
\end{equation}
When $\mathcal{V}_{\mathrm{F}}$ is nonempty, its buses are ordered by increasing
bus index and divided into the nonempty disjoint regions
$\rset=\{\mathcal{R}_1,\ldots,\mathcal{R}_{N_{\mathrm R}}\}$, where
$N_{\mathrm R}$ is the number of
regions. Thus,
\begin{equation}
\begin{aligned}
\mathcal{V}_{\mathrm{F}}
&=\biguplus_{m=1}^{N_{\mathrm R}}\mathcal{R}_m,
\\
1\le |\mathcal{R}_m|&\le n_{\max},
\qquad m=1,\ldots,N_{\mathrm R},
\\
Q_m&=(K-1)|\mathcal{R}_m|.
\end{aligned}
\label{eq:regional_partition}
\end{equation}
If $\mathcal{V}_{\mathrm{F}}$ is empty, the fixed anchors already determine the
complete assignment and no regional quantum search is required. Otherwise,
\begin{equation}
Q_{\mathrm R}=\max_m Q_m
=(K-1)\max_m|\mathcal{R}_m|
\le Q_{\max},
\label{eq:regional_width}
\end{equation}
independently of $N$. The current regionalizer uses deterministic chunks of the
bus-index ordering. A graph partitioner could replace this regionalizer without
changing the remaining method, but that alternative is not assumed here.

Within a regional update, internal endpoints remain optimization variables whereas
external endpoints are conditioned on the current complete assignment.

\subsection{Boundary-Conditioned Regional QUBO}

Each regional problem must preserve the cut contributions of both internal branches
and branches crossing the regional boundary, as illustrated by the regional QUBO
block of Fig.~\ref{fig:workflow}. At a regional update, let
$\bar{\bm{x}}$ be the latest complete global assignment
and let $\mathcal{R}_m$ be the active region. Define its internal and boundary
edge sets as
\begin{subequations}
\label{eq:regional_edges}
\begin{align}
\mathcal{E}^{\mathrm{int}}_m
&=\{\{i,j\}\in\mathcal{E}:i,j\in\mathcal{R}_m\},\\
\delta(\mathcal{R}_m)
&=\{e\in\mathcal{E}:|e\cap\mathcal{R}_m|=1\}.
\end{align}
\end{subequations}
For a boundary edge, write $i$ for its endpoint in $\mathcal{R}_m$ and $j$ for
the external endpoint. Let $\bm z_m$ collect the variables $z_{ig}$ with
$i\in\mathcal R_m$ and $0\le g\le K-2$. In regional expressions,
$\widehat y_{ig}(\bm z_m)$ denotes the corresponding restriction of
\eqref{eq:symmetry_encoding}. The exact contribution of all edges incident on the
active region, conditioned on $\bar{x}_j$, is
\begin{align}
C_m(\bm{z}_m\mid\bar{\bm{x}})
=\;&
\sum_{\{i,j\}\in\mathcal{E}^{\mathrm{int}}_m}
w_{ij}\left(1-\sum_{g\in\mathcal{K}}
\widehat y_{ig}(\bm z_m)\widehat y_{jg}(\bm z_m)\right)
\nonumber\\
&+
\sum_{\substack{\{i,j\}\in\delta(\mathcal{R}_m)\\i\in\mathcal{R}_m}}
w_{ij}\left(1-\widehat y_{i,\bar{x}_j}(\bm z_m)\right).
\label{eq:regional_cut}
\end{align}
Thus no internal edge is omitted and no boundary edge requires a cross-region
quantum gate. Edges having neither endpoint in the active region are constant
during that update and are excluded.

An inertia term discourages unnecessary changes from the current assignment:
\begin{equation}
C^{\mathrm{in}}_m(\bm{z}_m\mid\bar{\bm{x}})
=\eta\sum_{i\in\mathcal{R}_m}
(1-\widehat y_{i,\bar{x}_i}(\bm z_m)),
\qquad \eta\ge0.
\label{eq:inertia}
\end{equation}
The only encoding-validity term in the active region is the restriction of
\eqref{eq:symmetry_penalty}:
\begin{equation}
\Psi_{\mathrm{enc},m}(\bm z_m)
=\lambda_{\mathrm{enc}}
\sum_{i\in\mathcal{R}_m}
\sum_{0\le g<h\le K-2}z_{ig}z_{ih}.
\label{eq:regional_symmetry_penalty}
\end{equation}

The cut, inertia, and encoding-validity terms are local to the active region.
Island-wide resource counts, however, couple all regions and require global
coordination.

\subsection{Slack-Free Lagrangian Constraint Coordination}

Island-wide resource requirements couple the regional decisions, closing the
multiplier loop of Fig.~\ref{fig:workflow}. Adapting a
Lagrangian multiplier strategy for coupling constraints~\cite{jiang2026pace}, the
sequential coordinator represents these requirements without auxiliary quantum
variables.

Let $\mathcal T=\{\mathrm V,\mathrm G,\mathrm L\}$, with
$\mathcal V_{\mathrm V}=\mathcal V$, and let $M_\theta$ denote the corresponding
minimum in \eqref{eq:cardinality_constraints}. For $\theta\in\mathcal T$ and
$g\in\mathcal K$, define the global count
\begin{equation}
n_g^\theta(\bm x)=\sum_{i\in\mathcal V_\theta}\ind\{x_i=g\}.
\label{eq:global_counts}
\end{equation}
At coordination state $t$, the nonnegative multipliers
$\bm\lambda^t=(\lambda_g^{\theta,t})_{\theta\in\mathcal T,g\in\mathcal K}
\in\mathbb R_+^{3K}$ produce the active-region coefficient
\begin{equation}
\phi_{ig}(\bm\lambda^t)
=-\sum_{\theta\in\mathcal T}\ind\{i\in\mathcal V_\theta\}\lambda_g^{\theta,t}.
\label{eq:lagrangian_reward}
\end{equation}
After assignment-independent terms are omitted, the regional QUBO becomes
\begin{align}
F_m(\bm z_m\mid\bar{\bm x},\bm\lambda^t)
=\;&C_m(\bm z_m\mid\bar{\bm x})
+C^{\mathrm{in}}_m(\bm z_m\mid\bar{\bm x})
\nonumber\\
&+\Psi_{\mathrm{enc},m}(\bm z_m)
\nonumber\\
&+\sum_{i\in\mathcal R_m}\sum_{g\in\mathcal K}
\phi_{ig}(\bm\lambda^t)\widehat y_{ig}(\bm z_m).
\label{eq:regional_qubo}
\end{align}
Sweep $t+1$ produces the complete assignment $\bm x^{t+1}$. The
multipliers are updated by
\begin{equation}
\lambda^{\theta,t+1}_g
=\Pi_{[0,\lambda_{\max}]}
\left(
\lambda^{\theta,t}_g+\alpha_\theta\left[M_\theta-n^\theta_g(\bm{x}^{t+1})\right]
\right),
\label{eq:dual_update}
\end{equation}
where $\theta\in\mathcal T$, $g\in\mathcal K$, $\alpha_\theta>0$,
$\lambda_{\max}>0$, and
$\Pi_{[0,\lambda_{\max}]}(a)=\min\{\lambda_{\max},\max\{0,a\}\}$.
The complete assignment is still tested against $\fset$, so the multipliers guide
the regional search without replacing hard feasibility.

\subsection{Regional QAOA Candidate Generation}

The quantum stage explores alternative assignments within the active region while
the remainder of the network stays fixed. It is the only stage of
Fig.~\ref{fig:workflow} executed on quantum hardware. Quantum sampling
converts the regional
QUBO into a candidate pool. Let $\bm q_m\in\{0,1\}^{Q_m}$ be a fixed vectorization
of the variables in $\bm z_m$. Equation~\eqref{eq:regional_qubo} then has the
standard form
\begin{align}
F_m(\bm q_m\mid\bar{\bm x},\bm\lambda^t)
={}&a_{0,m}+\sum_{\mu=1}^{Q_m}a_{\mu,m}q_{m,\mu}
\nonumber\\
&+\sum_{1\le \mu<\nu\le Q_m}b_{\mu\nu,m}q_{m,\mu}q_{m,\nu}.
\label{eq:qubo_standard}
\end{align}
Here, $a_{0,m}\in\mathbb R$, $a_{\mu,m}\in\mathbb R$, and
$b_{\mu\nu,m}\in\mathbb R$ are the
constant, linear, and quadratic QUBO coefficients, respectively.
The substitution $q_{m,\mu}=(1-Z_\mu)/2$ maps it to a diagonal cost Hamiltonian
$\widehat{H}_{C,m}$. For depth $p$, the regional QAOA state is
\begin{equation}
|\bm{\gamma},\bm{\beta}\rangle_m
=\overleftarrow{\prod}_{k=1}^{p}
e^{-{\rm i}\beta_k\widehat{H}_{M,m}}
e^{-{\rm i}\gamma_k\widehat{H}_{C,m}}
|+\rangle^{\otimes Q_m},
\label{eq:qaoa_state}
\end{equation}
where the arrow places larger $k$ to the left,
$\bm\gamma=(\gamma_1,\ldots,\gamma_p)$,
$\bm\beta=(\beta_1,\ldots,\beta_p)$, and
$\widehat{H}_{M,m}=\sum_{\mu=1}^{Q_m}X_\mu$. Here, $Z_\mu$ and $X_\mu$ denote the Pauli
operators acting on qubit $\mu$. The linear and quadratic QUBO coefficients are
implemented by $R_Z$ and $R_{ZZ}$ rotations, respectively, and the mixer by
$R_X$ rotations.

The $2p$ angles are initialized from a seeded random point and optimized with
constrained optimization by linear approximations (COBYLA), using at most
$I_{\max}$ function evaluations per regional problem.
Expected energies are obtained by exact statevector simulation for smaller
$Q_m$ and by $N_s$-shot matrix-product-state simulation otherwise. After
optimization, distinct valid samples are ordered by measurement frequency. If
sampling produces fewer than the target pool size $B$, classically generated
assignments complete the pool. The current regional assignment is always retained, yielding the hybrid
candidate set $\cset_m$.

Each candidate is embedded into the current complete assignment before global
feasibility, connectivity, and repair are evaluated.

\subsection{Connectivity Evaluation and Repair}

Local encoding validity does not guarantee that the reconstructed islands are
connected, the condition detected in the reconstruct-and-repair block of
Fig.~\ref{fig:workflow}. Strict connectivity is established by the global evaluator after a
complete assignment has been formed. For a reconstructed label vector $\bm x$,
let $\mathcal B_g(\bm x)=\{i\in\mathcal V:x_i=g\}$. An empty
$\mathcal B_g(\bm x)$ is declared disconnected. Otherwise, choose any
$i_g^{(0)}\in\mathcal{B}_g(\bm x)$ and initialize the reached set
$\mathcal{D}^{(0)}_g=\{i_g^{(0)}\}$. The DFS repeatedly applies
\begin{equation}
\mathcal{D}^{(\xi+1)}_g
=\mathcal{D}^{(\xi)}_g
\cup
\left\{
j\in\mathcal{B}_g(\bm x):
\exists i\in\mathcal{D}^{(\xi)}_g,\ \{i,j\}\in\mathcal{E}
\right\}
\label{eq:dfs_reachability}
\end{equation}
until the reached set no longer changes. The nonempty island is connected if and
only if $\mathcal{D}^{(\infty)}_g=\mathcal{B}_g(\bm x)$.
This test is equivalent to requiring every induced island subgraph to be
connected. It uses only the original network edges whose two endpoints belong
to the same island.

Motivated by prior structured postprocessing for physics-constrained
islanding~\cite{jiang2026regrid}, the postprocessor uses connected components
to guide repair. When an island is disconnected, its components are considered in
increasing order of size, and a component is eligible for relocation to another
island only when all of its buses are non-anchor buses. Every repaired
assignment is re-verified by this DFS test, so the
postprocessor only has to propose promising moves while feasibility remains
certified exactly.

Globally evaluated assignments are compared by the feasibility-first ordering
\begin{equation}
\operatorname{rank}(\bm{x})=
\left(
\ind\{\bm{x}\notin\fset\},
S(\bm{x}),
\operatorname{lex}(\bm{x})
\right),
\label{eq:rank}
\end{equation}
with lexicographic tuple comparison. The first entry prioritizes feasibility.
Because $S(\bm x)=C(\bm x)$ for $\bm x\in\fset$, feasible assignments are
ordered by the physical objective, and the bus-index-ordered vector
$\operatorname{lex}(\bm x)$ resolves ties deterministically.

Denote the selected repair map by
$\operatorname{Rep}:\mathcal K^N\rightarrow\mathcal K^N$, with
$\operatorname{Rep}(\bm x)=\bm x$ when repair is disabled. Both the raw
assignment $\bm{x}$ and its repair $\operatorname{Rep}(\bm{x})$ are
globally reevaluated. Using the feasibility-first ordering defined in
\eqref{eq:rank}, the retained result is
\begin{equation}
\operatorname{Safe}(\bm{x})
=\arg\min_{\bm{u}\in\{\bm{x},\operatorname{Rep}(\bm{x})\}}
\operatorname{rank}(\bm{u}),
\label{eq:safe_repair}
\end{equation}
so repair cannot replace a feasible raw assignment by a worse or infeasible
result.

\subsection{Sequential Coordination and Candidate Ranking}

Sequential coordination converts regional samples into globally comparable
controlled-islanding candidates. Within each sweep, the regions are updated in
order, following the two feedback loops of Fig.~\ref{fig:workflow}. The QUBO
for $\mathcal R_m$ uses the latest assignments of
$\mathcal R_1,\ldots,\mathcal R_{m-1}$ and the assignments carried into the sweep
for the remaining regions. Each $\bm v\in\cset_m$ replaces only the active block
and the resulting complete assignment is processed by the safe-repair map
\eqref{eq:safe_repair}.

The working assignment and best-feasible archive are maintained separately.
Here, $T_{\min}$ and $T_{\max}$ are the minimum and maximum numbers of
coordination sweeps, respectively, while $T$ denotes the number actually
completed. The parameter $L$ is the required number of consecutive unchanged
sweeps, and the counter $s$ records this stability condition.
Algorithm~\ref{alg:sequential} gives the complete coordination and stopping rule.

\begin{algorithm}[t]
\caption{Sequential QAOA for Controlled Islanding}
\label{alg:sequential}
\footnotesize
\begin{algorithmic}[1]
\Statex \textbf{Input:} Data of \eqref{eq:full_model} and the parameters
defined above; \textbf{Output:} $\bm{x}^{\mathrm{best}}\in\fset\cup\{\bot\}$,
where $\bot$ denotes no feasible archive
\Statex \textbf{Notation:} $\bm x_{\mathcal R_m}$ restricts $\bm x$ to
$\mathcal R_m$; $\operatorname{MSSP}_{\ell}$ is the initialization in
\eqref{eq:initialization_length};
$\operatorname{Cand}(F_m,p,N_s,B)\subseteq\mathcal K^{|\mathcal R_m|}$ is the
regional candidate generator; $\operatorname{Embed}_m$ replaces block
$\mathcal R_m$ of a complete assignment;
$\operatorname{Best}_{\fset}(\bm a,\bm b)=\bm b$ if $\bm b\in\fset$ and
$[\bm a=\bot\lor C(\bm b)<C(\bm a)]$, and $\bm a$ otherwise
\Statex \textit{Initialization}
\State $\bm{x}\gets\operatorname{MSSP}_{\ell}
(\mathcal G,\mathcal H)$
\State $\bm{x}^{\mathrm{best}}\gets
\operatorname{Best}_{\fset}(\bot,\bm x)$
\If{$\mathcal V_{\mathrm F}=\varnothing$}
    \State \Return $\bm{x}^{\mathrm{best}}$
\EndIf
\State $n_{\max}\gets\lfloor Q_{\max}/(K-1)\rfloor$,
$\rset\gets\{\mathcal R_m\}_{m=1}^{N_{\mathrm R}}$ by
\eqref{eq:regional_partition}
\State $s\gets0$
\Statex \textit{Sequential regional coordination}
\For{$t=0,\ldots,T_{\max}-1$}
    \State $\bm{x}^{\mathrm{old}}\gets\bm{x}$
    \For{$m=1,\ldots,N_{\mathrm R}$}
        \State $\cset_m\gets
        \operatorname{Cand}(F_m(\cdot\mid\bm x,\bm\lambda^t),p,N_s,B)
        \cup\{\bm x_{\mathcal R_m}\}$
        \State $\mathcal U_m\gets
        \{\operatorname{Safe}(\operatorname{Embed}_m(\bm x,\bm v)):
        \bm v\in\cset_m\}$
        \State $\bm x\gets\arg\min_{\bm u\in\mathcal U_m}
        \operatorname{rank}(\bm u)$
        \State $\bm x^{\mathrm{best}}\gets
        \operatorname{Best}_{\fset}(\bm x^{\mathrm{best}},\bm x)$
    \EndFor
    \State $\lambda_g^{\theta,t+1}\gets
    \Pi_{[0,\lambda_{\max}]}(\lambda_g^{\theta,t}
    +\alpha_\theta[M_\theta-n_g^\theta(\bm x)])$,
    $\theta\in\mathcal T$, $g\in\mathcal K$
    \State $s\gets\ind\{\bm x=\bm x^{\mathrm{old}}\}(s+1)$
    \If{$t+1\ge T_{\min}$ and $s\ge L$}
        \State \textbf{break}
    \EndIf
\EndFor
\State \Return $\bm{x}^{\mathrm{best}}$
\end{algorithmic}
\end{algorithm}

\subsection{Properties and Scope of the Method}

The method preserves the best feasible solution encountered, but finite sequential
sampling does not guarantee global optimality.

\begin{proposition}[Archive preservation]
\label{prop:monotonic}
Let $\tau$ index archive updates after the first feasible archive is found, and
let $\tau_0$ be any such update.
Subsequent archive updates satisfy
\begin{align}
C(\bm{x}^{\mathrm{best},\tau+1})
&\le C(\bm{x}^{\mathrm{best},\tau}),
\nonumber\\
C(\bm{x}^{\mathrm{best},\tau_0})=C^\star
&\Longrightarrow
C(\bm{x}^{\mathrm{best},\tau})=C^\star,
\quad \tau\ge\tau_0.
\label{eq:archive_monotonicity}
\end{align}
\end{proposition}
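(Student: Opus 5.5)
The argument rests entirely on the definition of the archive operator $\operatorname{Best}_{\fset}$ in Algorithm~\ref{alg:sequential}, together with the definition of $C^\star$ in \eqref{eq:full_model}. The only places the archive changes are the initialization line and the inner-loop line $\bm x^{\mathrm{best}}\gets\operatorname{Best}_{\fset}(\bm x^{\mathrm{best}},\bm x)$. I will therefore index archive updates $\tau$ by successive executions of these assignments, and restrict attention to indices at which $\bm x^{\mathrm{best},\tau}\neq\bot$. The first claim is the monotone inequality, which I will establish by a one-step case analysis. The second claim will then follow from the first by induction combined with optimality of $C^\star$.

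\textbf{Step 1 (invariant).} First I will show that once the archive is non-$\bot$, it stays non-$\bot$ and always lies in $\fset$. By definition, $\operatorname{Best}_{\fset}(\bm a,\bm b)$ returns either $\bm a$ or $\bm b$. It returns $\bm b$ only when $\bm b\in\fset$. Hence if $\bm a\in\fset$, the output is in $\fset$ in both cases, and induction from the first feasible archive gives the invariant. In particular $C(\bm x^{\mathrm{best},\tau})$ is a well-defined real number for every $\tau$ in the range considered.

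\textbf{Step 2 (monotonicity).} Fix an update $\bm x^{\mathrm{best},\tau+1}=\operatorname{Best}_{\fset}(\bm x^{\mathrm{best},\tau},\bm x)$ with $\bm x^{\mathrm{best},\tau}\neq\bot$. The disjunct $\bm a=\bot$ is then false, so there are two cases. If $\bm x\in\fset$ and $C(\bm x)<C(\bm x^{\mathrm{best},\tau})$, the new archive is $\bm x$ and the inequality is strict. Otherwise the archive is unchanged and equality holds. This gives the first line of \eqref{eq:archive_monotonicity}. The ranking \eqref{eq:rank} and the safe-repair map \eqref{eq:safe_repair} are irrelevant to this step: whatever working assignment the inner loop produces, even an infeasible one, can only be accepted into the archive through the feasibility test inside $\operatorname{Best}_{\fset}$.

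\textbf{Step 3 (optimality persistence).} Suppose $C(\bm x^{\mathrm{best},\tau_0})=C^\star$. For $\tau\ge\tau_0$, chaining Step 2 gives $C(\bm x^{\mathrm{best},\tau})\le C^\star$. Step 1 gives $\bm x^{\mathrm{best},\tau}\in\fset$. By \eqref{eq:label_feasible_set} and \eqref{eq:full_model}, every feasible label vector satisfies $C\ge C^\star$, so equality follows. This also shows that the archive can change only to another optimal point: a strict decrease below $C^\star$ is impossible, so in fact no replacement occurs after $\tau_0$.

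The main point needing care is the bookkeeping rather than any estimate. One must check that the archive is touched only through $\operatorname{Best}_{\fset}$, including across sweep boundaries, the early-return branch, and the break condition, none of which modify $\bm x^{\mathrm{best}}$. One must also check that $C(\bm x)$ for a label vector coincides with the objective of \eqref{eq:full_model} via $\bm y(\bm x)$, so that the comparison with $C^\star$ is legitimate. I would state this identification explicitly through \eqref{eq:cut_identity}.
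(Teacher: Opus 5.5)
Your proposal is correct and follows the same argument as the paper: the archive is either kept or replaced by a feasible assignment with strictly smaller $C$, which gives monotonicity, and once $C^\star$ is attained, feasibility of the archive forces $C\ge C^\star$, so equality persists. Your Step~1 invariant (the archive stays in $\fset$) and your bookkeeping about where the archive is modified make explicit what the paper's two-line proof leaves implicit.
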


\begin{IEEEproof}
The archive is retained or replaced by a feasible assignment with smaller $C$.
The first relation follows directly. Once $C^\star$ is reached, feasibility gives
$C\ge C^\star$, so equality is preserved.
\end{IEEEproof}

\begin{proposition}[Exhaustive joint recovery]
\label{prop:exhaustive}
If $\cset_m=\mathcal K^{|\mathcal R_m|}$ for every $m$ and the coordinator
evaluates the Cartesian product
$\cset_1\times\cdots\times\cset_{N_{\mathrm R}}$, its best feasible candidate is a global
optimum of \eqref{eq:full_model}.
\end{proposition}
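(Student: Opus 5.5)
The plan is to show that the Cartesian product of full regional candidate sets, once embedded together with the fixed anchors, reproduces every complete label vector consistent with anchor fixing. Since $\fset$ lies in that set by the anchor-fixing canonicalization \eqref{eq:anchor_fixing}, the minimum of $C$ over its feasible members is $C^\star$. The argument has three steps: a coverage (surjectivity) step, a reduction from $\fset$ to anchor-consistent vectors, and a selection step using the ordering \eqref{eq:rank}.

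\emph{Coverage.} Take any $\bm x\in\mathcal K^N$ with $x_i=a$ for all $i\in\mathcal H_a$. By \eqref{eq:regional_partition}, $\mathcal V_{\mathrm F}$ is the disjoint union of the $\mathcal R_m$. Hence the restrictions $\bm x_{\mathcal R_m}$ are well defined, and each lies in $\mathcal K^{|\mathcal R_m|}=\cset_m$. Embedding the tuple $(\bm x_{\mathcal R_1},\ldots,\bm x_{\mathcal R_{N_{\mathrm R}}})$ block by block into any anchor-consistent base assignment returns exactly $\bm x$. This works because the anchors are fixed and every free bus is overwritten exactly once. The map from the product to anchor-consistent vectors is therefore a bijection. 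If $\mathcal V_{\mathrm F}=\varnothing$, the claim is immediate from the early return in Algorithm~\ref{alg:sequential}.

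\emph{Reduction of $\fset$.} I need that some global optimum of \eqref{eq:full_model} is anchor-consistent. Every $\bm x\in\fset$ satisfies \eqref{eq:coherency}--\eqref{eq:separation}, so each coherent group occupies a single island and distinct groups occupy distinct islands. Because there are $K$ groups and $K$ islands, this defines a permutation $\pi$ of $\mathcal K$. Relabeling by $\pi^{-1}$ yields an anchor-consistent vector. This vector is still feasible, since all constraints are symmetric in island labels: the cardinality bounds hold for every $g$, and connectivity is label-free. It also has the same $C$, because $\ind\{x_i\neq x_j\}$ is invariant under relabeling. This is the symmetry argument cited from~\cite{jiang2026pace}, and I would invoke it rather than reprove it. So $\min\{C(\bm x):\bm x\in\fset,\ \bm x\text{ anchor-consistent}\}=C^\star$, and the assumed nonemptiness of $\fset$ makes this set nonempty.

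\emph{Selection.} The coordinator ranks the evaluated product candidates by \eqref{eq:rank}, or equivalently by $\operatorname{Best}_{\fset}$. Feasible candidates precede infeasible ones, and among feasible candidates $S=C$. The selected candidate is therefore feasible with $C$ equal to the minimum above, namely $C^\star$, which makes it a global optimum.

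\textbf{Main obstacle.} The delicate point is the repair map. If candidates are passed through $\operatorname{Safe}$, I must check that this cannot lose the optimum. By \eqref{eq:safe_repair}, a feasible raw assignment is replaced only by something of no worse rank, hence something feasible with $C$ no larger. The optimal raw candidate thus survives, either as itself or as an equally optimal repair, and by Proposition~\ref{prop:monotonic} it is retained in the archive. I expect stating this interplay cleanly, together with the label-symmetry reduction, to be the step requiring the most care.
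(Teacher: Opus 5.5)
Your proposal is correct and follows essentially the same route as the paper's proof: the regions partition $\mathcal V_{\mathrm F}$ and anchor fixing \eqref{eq:anchor_fixing} excludes no physical partition, so the Cartesian product contains a relabeled copy of every feasible assignment. What you add is explicit detail the paper's two-line proof leaves implicit, namely the label-permutation argument for why anchor-consistent vectors suffice and the check that feasibility-first ranking and $\operatorname{Safe}$ cannot discard the optimal candidate.
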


\begin{IEEEproof}
The regions partition $\mathcal V_{\mathrm F}$, and
\eqref{eq:anchor_fixing} fixes the remaining labels without excluding any physical
partition. The Cartesian product therefore contains every feasible assignment.
\end{IEEEproof}

The exhaustive result does not extend to finite sampled pools or sequential block
updates. Even exact regional minimization can reach a coordinatewise local
minimum. Stability therefore terminates the heuristic but does not certify global
optimality.

%% file: figures/workflow.tex
\begingroup%
\definecolor{wfink}{HTML}{243040}%
\definecolor{wfink2}{HTML}{2C3947}%
\definecolor{wfgray}{HTML}{5D6874}%
\definecolor{wfrule}{HTML}{DDE3E9}%
\definecolor{wfedge}{HTML}{AAB4BF}%
\definecolor{wfbord}{HTML}{93A0AD}%
\definecolor{wfbordl}{HTML}{B9C3CD}%
\definecolor{wfpanel}{HTML}{F7F8FA}%
\definecolor{wfpanelb}{HTML}{CCD4DC}%
\definecolor{wfnode}{HTML}{7D8894}%
\definecolor{qpurple}{HTML}{3B3272}%
\definecolor{qlane}{HTML}{F0EEF8}%
\definecolor{qlaneb}{HTML}{B7AFD8}%
\definecolor{qfill}{HTML}{ECEAFA}%
\definecolor{qmix}{HTML}{9B93C9}%
\definecolor{qmixink}{HTML}{231C48}%
\definecolor{qbar}{HTML}{C9C3E2}%
\definecolor{lred}{HTML}{8C2F39}%
\definecolor{lredf}{HTML}{FDF6F7}%
\definecolor{lredb}{HTML}{C98A92}%
\definecolor{gbar}{HTML}{C6CED7}%
\definecolor{gbar2}{HTML}{D5DBE2}%
\definecolor{ggreen}{HTML}{5F7A6D}%
\definecolor{ggray}{HTML}{8A94A0}%
\begin{tikzpicture}[
  x=1pt,y=-1pt,
  every node/.style={inner sep=0pt,outer sep=0pt},
  ttl/.style      ={font=\scriptsize\bfseries,color=wfink},
  btl/.style      ={font=\scriptsize\bfseries,color=wfink},
  lbl/.style      ={font=\tiny,color=wfgray},
  lbli/.style     ={font=\tiny\itshape,color=wfgray},
  lane/.style     ={font=\tiny\bfseries},
  ed/.style       ={draw=wfedge,line width=0.45pt},
  onode/.style    ={draw=wfnode,line width=0.5pt,fill=white},
  cut/.style      ={draw=lred,line width=0.5pt,dash pattern=on 1.6pt off 1.2pt},
  flow/.style     ={draw=wfink2,line width=0.7pt,-{Stealth[length=3pt,width=2.4pt]}},
  qflow/.style    ={draw=qpurple,line width=0.7pt,-{Stealth[length=3pt,width=2.4pt]}},
  loop/.style     ={draw=lred,line width=0.7pt,dash pattern=on 2.6pt off 1.8pt,
                    -{Stealth[length=3pt,width=2.4pt]}},
  axis/.style     ={draw=wfedge,line width=0.5pt},
  stepc/.style    ={circle,draw=wfink2,fill=white,line width=0.5pt,
                    minimum size=8pt,font=\tiny,text=wfink2},
  stepq/.style    ={circle,draw=qpurple,fill=white,line width=0.5pt,
                    minimum size=8pt,font=\tiny,text=qpurple},
]

\useasboundingbox (5.2,13.2) rectangle (508.3,231.3);

\filldraw[rounded corners=2pt,fill=wfpanel,draw=wfpanelb,line width=0.5pt]
  (6,14) rectangle (100,232);
\node[ttl,anchor=north west] at (14,20) {PREPROCESSING};
\node[lbli,anchor=north west] at (14,30) {runs once};
\draw[wfrule,line width=0.5pt] (14,40) -- (92,40);

\node[btl,anchor=north west] at (14,45) {Anchors};
\draw[ed] (16,75) -- (38,63) -- (64,73) -- (89,62);
\draw[ed] (16,75) -- (29,92) -- (56,86) -- (81,92) -- (89,62);
\draw[ed] (38,63) -- (56,86);
\draw[ed] (64,73) -- (81,92);
\foreach \p in {(16,75),(64,73),(56,86)}
  \filldraw[fill=qpurple,draw=white,line width=0.5pt] \p circle (3.1pt);
\foreach \p in {(38,63),(89,62),(29,92),(81,92)}
  \filldraw[onode] \p circle (2.5pt);
\draw[wfrule,line width=0.5pt] (14,100) -- (92,100);

\node[btl,anchor=north west] at (14,105) {Encoding};
\filldraw[rounded corners=1pt,fill=gbar2,draw=none] (14,116) rectangle (92,120);
\node[lbl,anchor=north west] at (14,122) {$KN$};
\filldraw[rounded corners=1pt,fill=qpurple,draw=none] (14,133) rectangle (55,137);
\node[lbl,anchor=north west] at (14,139)
  {$(K-1)\lvert\mathcal V_{\mathrm F}\rvert$};
\draw[wfrule,line width=0.5pt] (14,152) -- (92,152);

\node[btl,anchor=north west] at (14,157) {Regions};
\node[anchor=north west,font=\scriptsize,color=wfink2] at (16,166)
  {$n_{\max}=\left\lfloor\dfrac{Q_{\max}}{K-1}\right\rfloor$};
\foreach \x in {14,40,70}
  \draw[rounded corners=2pt,draw=wfbord,line width=0.45pt,
        dash pattern=on 1.6pt off 1.2pt] (\x,192) rectangle (\x+21,216);
\foreach \x in {14,40,70}{
  \draw[ed] (\x+4,199) -- (\x+15,196) -- (\x+9,211) -- cycle;
  \foreach \p in {(\x+4,199),(\x+15,196),(\x+9,211)}
    \filldraw[onode] \p circle (2.1pt);
}
\draw[ggray,line width=0.4pt,dash pattern=on 1.2pt off 1.2pt]
  (35,204) -- (44,201);
\node[lbl,anchor=north] at (24.5,218) {$\mathcal R_1$};
\node[lbl,anchor=north] at (50.5,218) {$\mathcal R_2$};
\node[lbl,anchor=north] at (80.5,218) {$\mathcal R_{N_{\mathrm R}}$};
\node[lbl] at (65.5,204) {$\cdots$};

\draw[flow] (101,123) -- (107.5,123);

\draw[rounded corners=2pt,draw=wfbordl,line width=0.6pt] (108,14) rectangle (510,232);
\node[anchor=north west,font=\footnotesize\bfseries,color=wfink] at (116,19)
  {Sequential coordination};

\node[lbli,anchor=east] at (386,22.5) {update rounds};
\filldraw[rounded corners=1.5pt,fill=wfpanel,draw=wfbordl,line width=0.4pt]
  (392,17) rectangle (408,28);
\filldraw[rounded corners=1.5pt,fill=wfpanel,draw=wfbordl,line width=0.4pt]
  (412,17) rectangle (428,28);
\filldraw[rounded corners=1.5pt,fill=qfill,draw=qpurple,line width=0.7pt]
  (442,17) rectangle (458,28);
\filldraw[rounded corners=1.5pt,fill=wfpanel,draw=wfbordl,line width=0.4pt]
  (472,17) rectangle (488,28);
\node[lbl] at (400,22.5) {$\mathcal R_1$};
\node[lbl] at (420,22.5) {$\mathcal R_2$};
\node[lbl] at (435,22.5) {$\cdots$};
\node[lbl,color=qpurple] at (450,22.5) {$\mathcal R_m$};
\node[lbl] at (465,22.5) {$\cdots$};
\node[lbl] at (480,22.5) {$\mathcal R_{N_{\mathrm R}}$};
\draw[wfrule,line width=0.5pt] (116,32) -- (502,32);

\filldraw[rounded corners=2pt,fill=wfpanel,draw=wfrule,line width=0.5pt]
  (114,36) rectangle (504,128);
\node[lane,color=wfgray,rotate=90] at (120,84) {CLASSICAL};

\filldraw[rounded corners=2pt,fill=white,draw=wfbord,line width=0.5pt]
  (130,46) rectangle (222,122);
\node[btl,anchor=north] at (176,50) {Regional QUBO};
\filldraw[rounded corners=2pt,fill=qfill,draw=qpurple,line width=0.6pt]
  (160,64) rectangle (194,98);
\draw[ed,dash pattern=on 1.6pt off 1.2pt]
  (142,68) -- (168,72)  (142,96) -- (166,92)
  (210,70) -- (186,74)  (212,98) -- (188,90);
\draw[qpurple,line width=0.5pt]
  (168,72) -- (186,74) -- (188,90) -- (166,92) -- cycle;
\draw[qpurple,line width=0.5pt] (168,72) -- (188,90);
\foreach \p in {(142,68),(142,96),(210,70),(212,98)}
  \filldraw[onode] \p circle (2.6pt);
\foreach \p in {(168,72),(186,74),(188,90),(166,92)}
  \filldraw[fill=qpurple,draw=white,line width=0.5pt] \p circle (2.8pt);
\node[lbl,anchor=north west] at (136,106) {boundary fixed};
\node[lbl,anchor=north,color=qpurple] at (198,106) {$\mathcal R_m$};

\filldraw[rounded corners=2pt,fill=white,draw=wfbord,line width=0.5pt]
  (232,46) rectangle (340,122);
\node[btl,anchor=north] at (286,50) {Reconstruct + repair};
\draw[ed] (244,68) -- (260,60) -- (252,78) -- cycle;
\draw[ed] (252,78) -- (244,92);
\draw[ed] (286,62) -- (300,68) -- (290,80) -- cycle;
\draw[ed] (284,92) -- (298,98) -- (312,90);
\draw[cut] (260,60) -- (286,62)  (252,78) -- (290,80)
           (244,92) -- (284,92)  (314,62) -- (312,90);
\foreach \p in {(244,68),(260,60),(252,78),(244,92)}
  \filldraw[fill=qpurple,draw=white,line width=0.5pt] \p circle (2.6pt);
\foreach \p in {(286,62),(300,68),(290,80),(314,62)}
  \filldraw[fill=ggreen,draw=white,line width=0.5pt] \p circle (2.6pt);
\foreach \p in {(284,92),(298,98),(312,90)}
  \filldraw[fill=ggray,draw=white,line width=0.5pt] \p circle (2.6pt);
\draw[cut] (314,62) circle (5.6pt);
\draw[cut] (237,109) -- (246,109);
\node[lbl,anchor=west] at (248,109) {cut edge};
\filldraw[fill=ggreen,draw=none] (280,109) circle (1.6pt);
\draw[cut] (280,109) circle (3.4pt);
\node[lbl,anchor=west] at (286,109) {disconnected island};

\filldraw[rounded corners=2pt,fill=white,draw=wfbord,line width=0.5pt]
  (350,46) rectangle (420,122);
\node[btl,anchor=north] at (385,50) {Keep best feasible};
\draw[axis] (363,64) -- (363,106) -- (414,106);
\filldraw[fill=gbar,draw=none] (367,78) rectangle (375,106);
\filldraw[fill=gbar,draw=none] (379,72) rectangle (387,106);
\filldraw[fill=qpurple,draw=none] (391,88) rectangle (399,106);
\filldraw[fill=gbar,draw=none] (403,75) rectangle (411,106);
\filldraw[fill=qpurple,draw=none]
  (395,73) -- (396.4,76.2) -- (399.9,76.5) -- (397.2,78.8) -- (398.1,82.2)
  -- (395,80.3) -- (391.9,82.2) -- (392.8,78.8) -- (390.1,76.5)
  -- (393.6,76.2) -- cycle;
\node[lbl,anchor=north] at (388,110) {candidates};
\node[lbl,rotate=90] at (357,85) {cut value};

\filldraw[rounded corners=2pt,fill=lredf,draw=lredb,line width=0.5pt]
  (430,46) rectangle (500,122);
\node[btl,anchor=north,color=lred] at (465,50) {Multiplier update};
\draw[axis] (443,64) -- (443,106) -- (494,106);
\draw[lred,line width=0.8pt]
  (445,67) .. controls (461,99) and (470,103) .. (492,104.5);
\node[lbl,anchor=north] at (468,110) {sweep $t$};
\node[lbl,rotate=90,color=lred] at (437,85) {violation};

\draw[flow] (341,84) -- (348.5,84);
\draw[flow] (421,84) -- (428.5,84);

\filldraw[rounded corners=2pt,fill=qlane,draw=qlaneb,line width=0.5pt]
  (114,136) rectangle (504,226);
\node[lane,color=qpurple,rotate=90] at (120,181) {QUANTUM};

\filldraw[rounded corners=2pt,fill=white,draw=qpurple,line width=0.6pt]
  (130,144) rectangle (272,220);
\node[btl,anchor=north,color=qpurple] at (201,148) {QAOA sampling};
\foreach \y in {170,182,194,206}{
  \draw[ed] (148,\y) -- (262,\y);
  \node[lbl,anchor=east] at (146,\y) {$\lvert 0\rangle$};
  \filldraw[fill=white,draw=wfnode,line width=0.45pt,rounded corners=0.6pt]
    (150,\y-4.5) rectangle (159,\y+4.5);
  \node[font=\tiny,color=wfink2] at (154.5,\y) {$H$};
  \filldraw[fill=white,draw=wfnode,line width=0.45pt,rounded corners=0.6pt]
    (240,\y-4.5) rectangle (251,\y+4.5);
  \draw[wfnode,line width=0.4pt] (242.5,\y+2) arc (180:0:3pt);
  \draw[wfnode,line width=0.4pt] (245.5,\y+2) -- (249,\y-2);
}
\filldraw[fill=qpurple,draw=none,rounded corners=1pt] (164,164) rectangle (173,212);
\node[font=\tiny,color=white,rotate=90] at (168.5,188) {$H_C$};
\filldraw[fill=qmix,draw=none,rounded corners=1pt] (177,164) rectangle (186,212);
\node[font=\tiny,color=qmixink,rotate=90] at (181.5,188) {$H_M$};
\filldraw[fill=qpurple,draw=none,rounded corners=1pt] (208,164) rectangle (217,212);
\node[font=\tiny,color=white,rotate=90] at (212.5,188) {$H_C$};
\filldraw[fill=qmix,draw=none,rounded corners=1pt] (221,164) rectangle (230,212);
\node[font=\tiny,color=qmixink,rotate=90] at (225.5,188) {$H_M$};
\node[lbl] at (197,188) {$\cdots$};
\node[lbl,anchor=south] at (175,162) {layer $1$};
\node[lbl,anchor=south] at (219,162) {layer $p$};
\node[lbl,anchor=north,color=qpurple] at (201,213)
  {$Q_m\le Q_{\max}$ qubits};

\filldraw[rounded corners=2pt,fill=white,draw=qlaneb,line width=0.5pt]
  (282,144) rectangle (362,220);
\node[btl,anchor=north,color=qpurple] at (322,148) {Candidate pool};
\draw[axis] (294,164) -- (294,204) -- (354,204);
\filldraw[fill=qbar,draw=none] (298,188) rectangle (305,204);
\filldraw[fill=qbar,draw=none] (308,180) rectangle (315,204);
\filldraw[fill=qbar,draw=none] (318,192) rectangle (325,204);
\filldraw[fill=qpurple,draw=none] (328,170) rectangle (335,204);
\filldraw[fill=qbar,draw=none] (338,184) rectangle (345,204);
\node[lbl,anchor=north] at (322,208) {sampled candidates};

\draw[qflow] (273,182) -- (280.5,182);

\filldraw[rounded corners=2pt,fill=white,draw=qlaneb,line width=0.5pt,
          dash pattern=on 1.8pt off 1.4pt]
  (372,144) rectangle (500,220);
\node[btl,anchor=north,color=qpurple] at (436,148)
  {Qubit count vs.\ network size};
\draw[axis,-{Stealth[length=2.6pt,width=2pt]}] (388,204) -- (388,164);
\draw[axis,-{Stealth[length=2.6pt,width=2pt]}] (388,204) -- (490,204);
\draw[ggray,line width=0.8pt,dash pattern=on 2.2pt off 1.6pt]
  (392,201) -- (486,168);
\node[lbl,anchor=south east] at (486,166) {monolithic};
\draw[qpurple,line width=1.1pt] (392,192) -- (486,192);
\node[lbl,anchor=north,text=qpurple] at (438,195.5) {proposed framework};
\node[lbl,anchor=north] at (438,206) {$N$};
\node[lbl,rotate=90] at (382,184) {qubits};

\draw[qflow] (176,123) -- (176,143);
\draw[qflow] (322,143) -- (322,123);

\node[stepc] at (130,122) {1};
\node[stepq] at (130,220) {2};
\node[stepq] at (282,220) {3};
\node[stepc] at (232,122) {4};
\node[stepc] at (350,122) {5};
\node[stepc] at (430,122) {6};

\draw[loop] (385,45) -- (385,41) -- (176,41) -- (176,45);
\node[lbl,fill=wfpanel,inner xsep=2pt,inner ysep=0.5pt,text=lred] at (280,41)
  {next region $m$};

\draw[loop] (465,123) -- (465,131) -- (125,131) -- (125,84) -- (129,84);
\node[lbl,fill=white,inner xsep=2pt,inner ysep=0.5pt,text=lred] at (400,131)
  {next sweep $t$};

\end{tikzpicture}%
\endgroup%

%% file: numerical_examples.tex
\section{Numerical Examples}
\label{sec:numerical_examples}

In this section, we present the reconstructed islanding solutions, compare cut
values across sampling platforms and Gurobi reference solutions, and examine
the resource scaling and hardware execution cost of the proposed method. The
method is evaluated on eleven IEEE benchmark systems ranging from $9$ to $300$
buses. All branch weights and cut values are reported in MW.
Table~\ref{tab:experiment_setup} summarizes the QAOA and coordination settings,
including the island count $K$, depth $p$, shot count $N_s$, candidate-pool
size $B$, regional bus and qubit limits, and stopping parameters. The main experiments are
executed on the IBM Quantum \texttt{ibm\_fez} processor. To assess noise
resilience, the hardware results are compared with corresponding runs on an
ideal Qiskit Aer statevector simulator and on an Aer simulator equipped with a
noise model calibrated from an IBM quantum processor.

\begin{table}[t]
\centering
\caption{Test-system execution settings.}
\label{tab:experiment_setup}
\setlength{\tabcolsep}{1.7pt}
\resizebox{0.90\columnwidth}{!}{%
\begin{tabular}{lrrrrrrrrrr}
\toprule
System & $K$ & $p$ & $N_s$ & $B$ & $n_{\max}$ & $Q_{\max}$ &
$T_{\min}/T_{\max}$ & $L$ & $I_{\max}$ & $T$ \\
\midrule
$9,14,24$   & $2,2,3$ & 1 & 100  & 8  & $3,9,4$ & $3,9,8$ & 1/5   & 2 & 20 & 3 \\
$30$        & 2       & 1 & 300  & 16 & 10      & 10      & 5/15  & 3 & 25 & 10 \\
$39$        & 3       & 1 & 1000 & 16 & 4       & 8       & 5/15  & 3 & 25 & 10 \\
$57$        & 2       & 2 & 1500 & 8  & 10      & 10      & 5/15  & 3 & 8  & 10 \\
$73$        & 3       & 3 & 2000 & 8  & 6       & 12      & 5/15  & 3 & 8  & 10 \\
$89$        & 3       & 3 & 3000 & 8  & 6       & 12      & 8/20  & 3 & 8  & 8 \\
$118,145$   & 4       & 3 & 3000 & 8  & 6       & 18      & 8/20  & 3 & 8  & 8 \\
$300$       & 3       & 5 & 5000 & 8  & 6       & 12      & 8/20  & 3 & 10 & 8 \\
\bottomrule
\end{tabular}
}
\end{table}

\subsection{Representative Islanding Solutions}

In this section, the reconstructed islanding results for the IEEE benchmark
systems are illustrated and compared with the Gurobi reference values.

Table~\ref{tab:island_solutions} reports the reconstructed feasible partitions
for the IEEE 9- through 73-bus systems. The complete bus-to-island assignments
document the island memberships underlying the cut values compared in
Table~\ref{tab:platform_quality}. For the 9-, 14-, 24-, 30-, 39-, 57-, and
73-bus systems, the corresponding cut values are $1.24$, $88.24$, $776.21$,
$16.86$, $228.99$, $128.24$, and $284.54$ MW, respectively.
For presentation, the canonical labels $0,\ldots,K-1$ are renumbered as
Islands 1 through $K$ in the table and figures.

Figs.~\ref{fig:islanding_small}--\ref{fig:islanding_300} then show the
reconstructed feasible islanding solutions for the four larger benchmark
systems. Node color denotes the assigned island, whereas dashed edges identify
the transmission branches opened between islands. Colors distinguish the
islands but are not tied to canonical numerical labels. Thus, the figures provide a
network-level view of the assignments returned by the regional optimization and
show that every colored subgraph remains connected after the cut. The coherent
generator groups are placed in separate islands in all four cases.

The IEEE 89-bus solution contains three islands with $38$, $31$, and $20$ buses
and has a cut value of $3790.94$ MW. The IEEE 118-bus solution contains four
islands with $29$, $19$, $24$, and $46$ buses and has a cut value of $634.05$
MW.
For the IEEE 145-bus system, the reconstructed partition comprises one
$124$-bus island and three $7$-bus islands, with a cut value of $11859.80$ MW.
The IEEE 300-bus solution forms three connected islands of $66$, $143$, and
$91$ buses, respectively, and attains a cut value of $2322.40$ MW. The separate
145- and 300-bus panels preserve the readability of individual bus labels and
make the inter-island boundaries visible in the two densest networks.

Table~\ref{tab:platform_quality} compares these solutions with the Gurobi
reference values. For every reported benchmark, the best cut value obtained by
the proposed method on the IBM quantum processor matches the corresponding
Gurobi optimum while satisfying all islanding constraints. This agreement shows
that, for the tested configurations, the sequential distributed method can
recover optimal islanding solutions using real quantum hardware. It is empirical
evidence for the reported cases rather than a general convergence guarantee.

\begin{table}[t]
\centering
\caption{Islanding solutions by bus number for the test systems.}
\label{tab:island_solutions}
\scriptsize
\setlength{\tabcolsep}{4pt}
\renewcommand{\arraystretch}{1.15}
\begin{tabular}{@{}l@{\hspace{5pt}}p{0.78\columnwidth}@{}}
\toprule
System & Islands (bus numbers) \\
\midrule
9-bus  & Island 1: 2--3, 6--9\quad Island 2: 1, 4--5 \\
14-bus & Island 1: 6--14\quad Island 2: 1--5 \\
24-bus & Island 1: 3, 15--19, 21--22, 24\quad Island 2: 6, 10--14, 20, 23\quad
         Island 3: 1--2, 4--5, 7--9 \\
30-bus & Island 1: 10, 12--30\quad Island 2: 1--9, 11 \\
39-bus & Island 1: 4--14, 31--32, 39\quad Island 2: 15--24, 33--36\quad
         Island 3: 1--3, 25--30, 37--38 \\
57-bus & Island 1: 1--5, 11, 13--23, 32--49, 56--57\quad Island 2: 6--10, 12,
         24--31, 50--55 \\
73-bus & Island 1: 7--8, 25--48\quad Island 2: 49--72\quad Island 3: 1--6, 9--24,
         73 \\
\bottomrule
\end{tabular}
\end{table}

Table~\ref{tab:platform_quality} uses $C_{\mathrm I}$, $C_{\mathrm N}$, and
$C_{\mathrm H}$ for the best cut weights obtained from ideal Aer,
IBM-calibrated noisy Aer, and IBM hardware, respectively, while
$C^\star$ denotes the Gurobi optimum. The $G$ and $d$ triplets report
the transpiled two-qubit gate count and overall circuit depth in the same
order.

\begin{figure}[t]
\centering
\includegraphics[trim=0 18pt 0 12pt,clip,width=0.92\columnwidth]{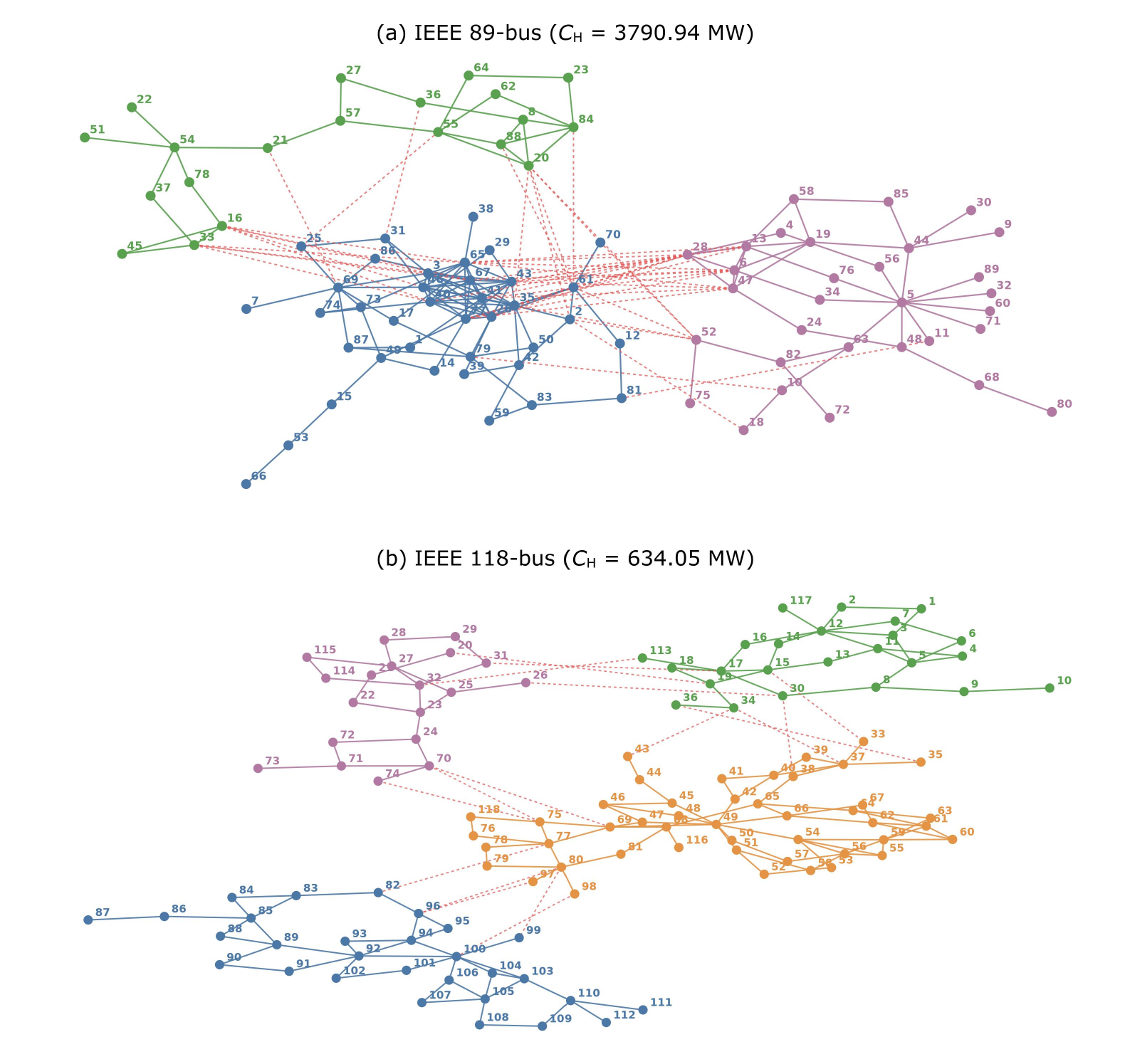}
\caption{Reconstructed feasible islanding solutions on the IEEE 89- and
118-bus systems. Node color denotes island membership, dashed edges mark
inter-island cuts, and each panel reports the corresponding hardware cut value.}
\label{fig:islanding_small}
\end{figure}

\begin{figure}[t]
\centering
\includegraphics[trim=0 25pt 0 12pt,clip,width=0.92\columnwidth]{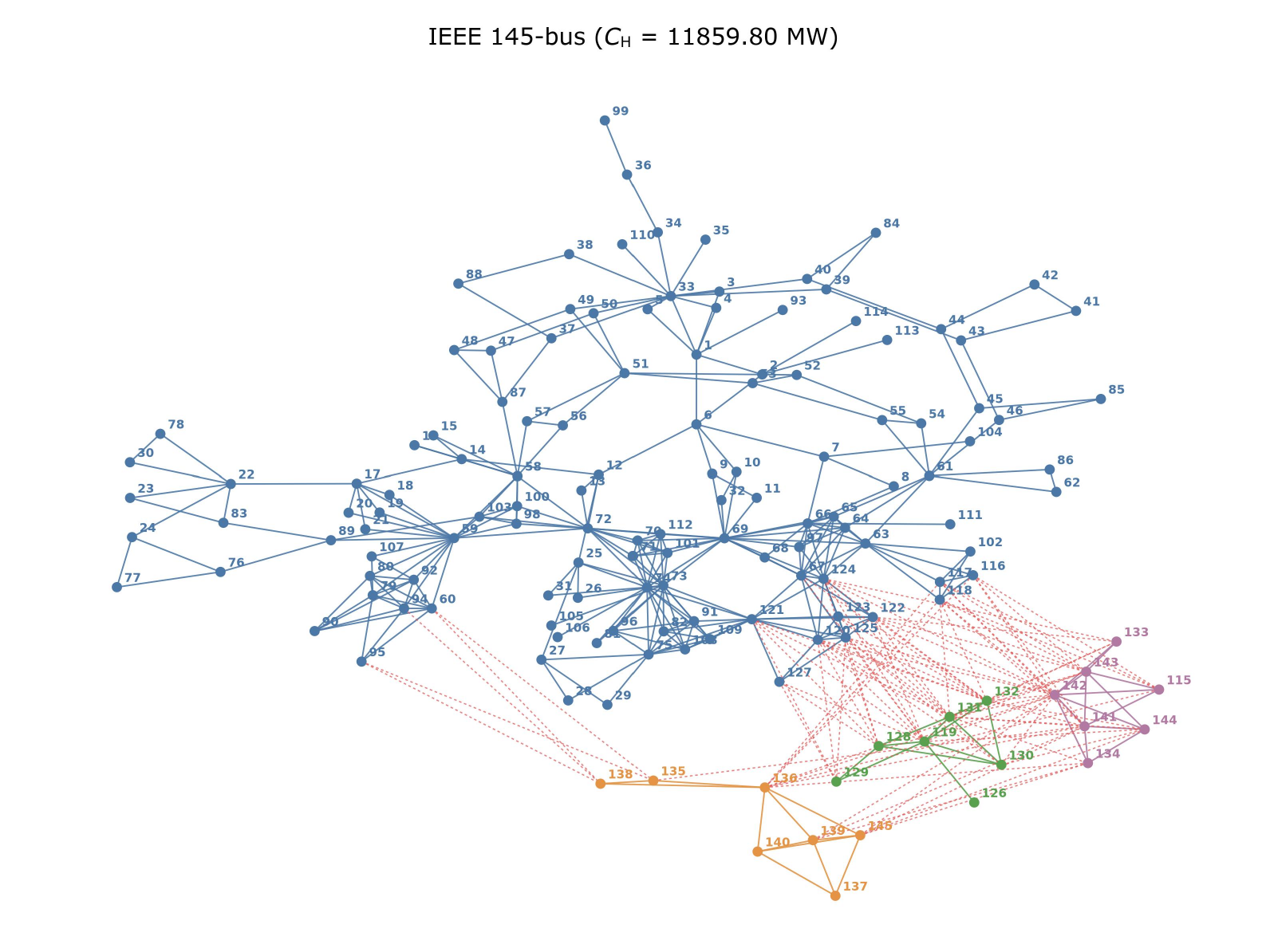}
\caption{Reconstructed feasible islanding solution on the IEEE 145-bus system.}
\label{fig:islanding_145}
\end{figure}

\begin{figure}[t]
\centering
\includegraphics[trim=0 48pt 0 12pt,clip,width=\columnwidth]{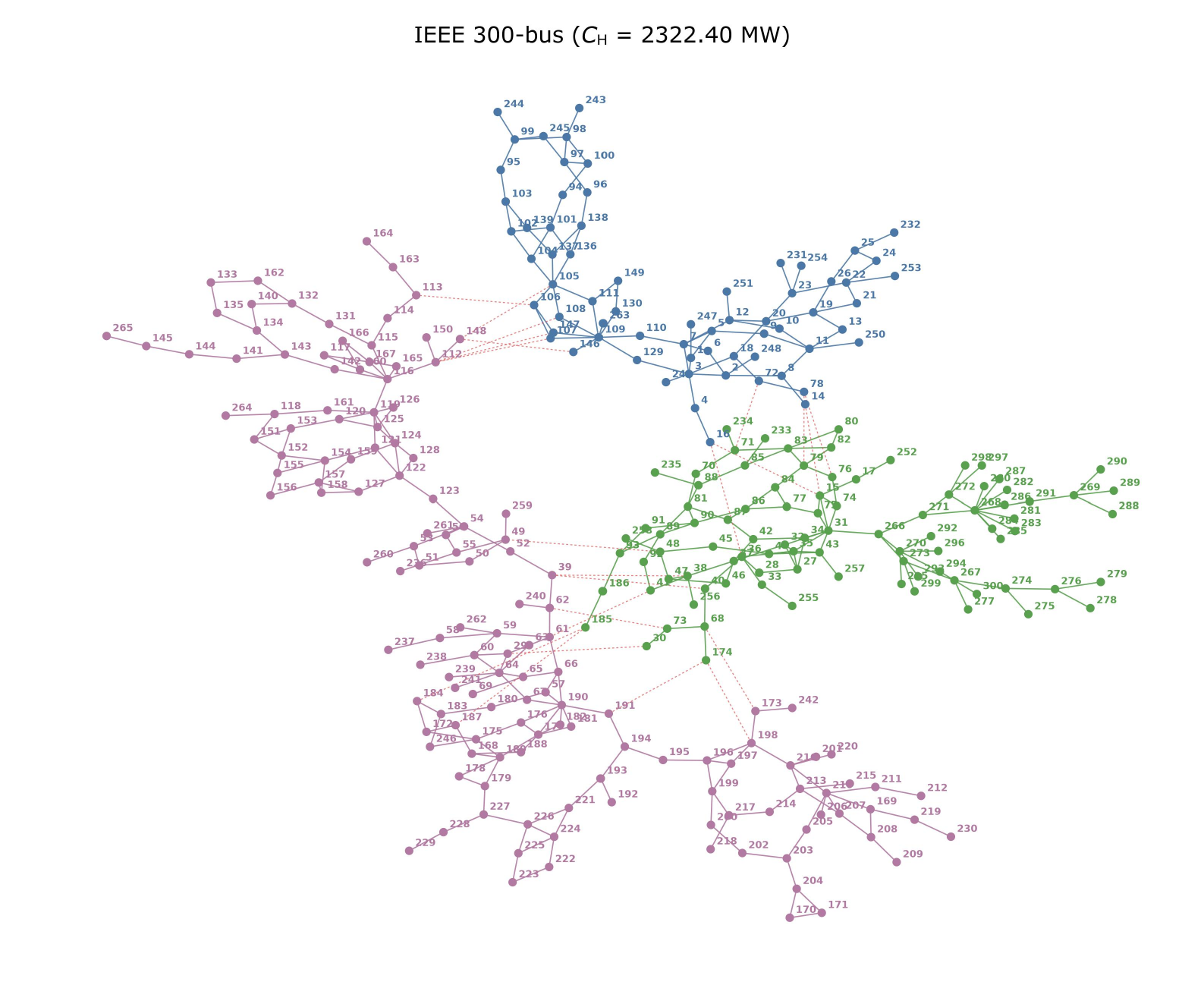}
\caption{Reconstructed feasible islanding solution on the IEEE 300-bus system.}
\label{fig:islanding_300}
\end{figure}

\begin{table}[t]
\centering
\caption{Cross-platform cut values (MW), two-qubit gate counts $G$, and circuit
depths $d$.}
\label{tab:platform_quality}
\scriptsize
\setlength{\tabcolsep}{1.7pt}
\begin{tabular}{lrrrrrr}
\toprule
System & $C_{\mathrm I}$ & $C_{\mathrm N}$ & $C_{\mathrm H}$ &
$C^\star$ & $G$ & $d$ \\
\midrule
9-bus   & 1.24     & 1.24     & 1.24     & 1.24     & 2/4/4             & 6/22/22 \\
14-bus  & 88.24    & 88.24    & 88.24    & 88.24    & 9/22/22           & 11/67/71 \\
24-bus  & 776.21   & 776.21   & 776.21   & 776.21   & 6/16/16           & 5/36/36 \\
30-bus  & 16.86    & 16.86    & 16.86    & 16.86    & 6/12/12           & 7/39/37 \\
39-bus  & 228.99   & 228.99   & 228.99   & 228.99   & 16/55/56          & 10/112/119 \\
57-bus  & 128.24   & 128.24   & 128.24   & 128.24   & 14/26/26          & 11/78/83 \\
73-bus  & 284.54   & 284.54   & 284.54   & 284.54   & 30/73/75          & 16/195/203 \\
89-bus  & 3790.94  & 3790.94  & 3790.94  & 3790.94  & 24/49/57          & 11/124/150 \\
118-bus & 634.05   & 634.05   & 634.05   & 634.05   & 108/433/435      & 28/221/229 \\
145-bus & 11859.80 & 11859.80 & 11859.80 & 11859.80 & 135/615/616      & 46/549/566 \\
300-bus & 2322.40  & 2322.40  & 2322.40  & 2322.40  & 62/220/220        & 35/506/260 \\
\bottomrule
\end{tabular}
\end{table}

\subsection{Hardware Execution Cost}

Table~\ref{tab:hardware_cost} uses $N_{\mathrm R}$ for the region count,
$N_c=N_{\mathrm R}T$ for the number of executed regional circuits, and
$Q_{\mathrm R}=\max_m Q_m$ for the largest simultaneous regional register.
The symmetry-encoded monolithic register width is $Q_{\mathrm{sym}}$ from
\eqref{eq:symmetry_qubit_count}. The factor
$\chi=KN/Q_{\mathrm R}$ measures
the width reduction relative to one-hot
encoding and is reported to the nearest integer. The quantities $G$, $d$, and
$t_q$ denote the transpiled two-qubit gate count, overall circuit depth, and
accumulated quantum processing unit (QPU) time, respectively.
For the proposed method, $G$, $d$, and $t_q$ are obtained from the IBM hardware
executions. Within each column group, the quantities refer to the method identified
by the group heading. An N/A entry indicates that the monolithic circuit exceeds
the available hardware width.

Regionalization keeps the simultaneous circuit width between $3$ and $18$
qubits as the network grows from $9$ to $300$ buses. By comparison,
the monolithic qubit count increases from $6$ to $462$. The difference is
already $12$ versus $154$ qubits for the 89-bus system and reaches $18$ versus
$285$ at 145 buses and $12$ versus $462$ at 300 buses. The monolithic
requirements of $192$, $285$, and $462$ qubits for the three largest systems
exceed the $156$ physical qubits of \texttt{ibm\_fez}. The proposed method
accommodates this growth through additional regions and circuit executions,
with $N_{\mathrm R}$ increasing to $39$ and $N_c$ to $312$, while preserving
a bounded simultaneous register. This structure yields a one-hot width
reduction of up to $75\times$.

The bounded register also reduces the compiled burden of each circuit. For all
eight systems executed by both methods, the proposed method has lower $G$ and
$d$. At 24 buses, the two-qubit gate count decreases from $248$ to $16$ and the
depth from $94$ to $36$. At 39 buses, the corresponding reductions are from
$2227$ to $56$ and from $720$ to $119$. The advantage persists at 89 buses,
where the proposed circuit contains $57$ two-qubit gates at depth $150$, while
the monolithic circuit contains $3129$ two-qubit gates at depth $865$. Thus,
the decomposition controls both circuit width and per-circuit compilation
overhead as the problem size increases.

Sequential regional execution enables the proposed method to retain a bounded
hardware footprint as the network size increases. The reported $t_q$ therefore
accumulates the processor time of the regional circuits. From 9 to 57 buses,
the proposed $t_q$ ranges from $4$ to $23$~s,
compared with $2$ to $6$~s for the monolithic baseline. At 73 and 89 buses, the
corresponding values are $50$ versus $26$~s and $101$ versus $47$~s. However,
the increase remains moderate in absolute terms. Even the 300-bus system
requires only $263$~s of QPU time while retaining a 12-qubit
simultaneous register. The regional formulation also remains executable for
the 118- and 145-bus systems with QPU times of $146$ and $176$~s, whereas the
corresponding monolithic circuits exceed the available qubit capacity. The
results therefore show a favorable scalability trade-off in which manageable
sequential execution supports bounded circuit width and substantially lower
 circuit complexity.

\begin{table}[t]
\centering
\caption{Hardware cost and monolithic-QAOA resources.}
\label{tab:hardware_cost}
\scriptsize
\setlength{\tabcolsep}{1.5pt}
\resizebox{0.96\columnwidth}{!}{%
\begin{tabular}{lrrrrrrrrrrr}
\toprule
& \multicolumn{6}{c}{Proposed method} & \multicolumn{1}{c}{Width reduction} &
\multicolumn{4}{c}{Monolithic QAOA} \\
\cmidrule(lr){2-7}\cmidrule(lr){8-8}\cmidrule(l){9-12}
System & $N_{\mathrm R}$ & $N_c$ & $Q_{\mathrm R}$ & $G$ & $d$ &
$t_q$ (s) & $\chi$ & $Q_{\mathrm{sym}}$ & $G$ & $d$ & $t_q$ (s) \\
\midrule
9-bus   & 2  & 6   & 3  & 4   & 22  & 4   & $6\times$  & 6   & 21   & 25  & 2 \\
14-bus  & 1  & 3   & 9  & 22  & 71  & 6   & $3\times$  & 9   & 27   & 77  & 2 \\
24-bus  & 4  & 12  & 8  & 16  & 36  & 6   & $9\times$  & 26  & 248  & 94  & 3 \\
30-bus  & 3  & 30  & 10 & 12  & 37  & 10  & $6\times$  & 24  & 127  & 62  & 4 \\
39-bus  & 8  & 80  & 8  & 56  & 119 & 16  & $15\times$ & 58  & 2227 & 720 & 6 \\
57-bus  & 5  & 50  & 10 & 26  & 83  & 23  & $11\times$ & 50  & 638  & 544 & 6 \\
73-bus  & 5  & 50  & 12 & 75  & 203 & 50  & $18\times$ & 54  & 2575 & 719 & 26 \\
89-bus  & 13 & 104 & 12 & 57  & 150 & 101 & $22\times$ & 154 & 3129 & 865 & 47 \\
118-bus & 11 & 88  & 18 & 435 & 229 & 146 & $26\times$ & 192 & \multicolumn{3}{c}{N/A} \\
145-bus & 16 & 128 & 18 & 616 & 566 & 176 & $32\times$ & 285 & \multicolumn{3}{c}{N/A} \\
300-bus & 39 & 312 & 12 & 220 & 260 & 263 & $75\times$ & 462 & \multicolumn{3}{c}{N/A} \\
\bottomrule
\end{tabular}
}
\end{table}

\subsection{Solution Quality and Noise Resilience}

In this section, the noise resilience of the proposed method is evaluated using
the solution-quality and circuit-resource data in
Table~\ref{tab:platform_quality}. The comparison covers ideal Aer simulation,
IBM-calibrated noisy simulator, and IBM hardware under the common notation
and platform order defined with the table.

The identical cut values obtained from ideal simulator, calibrated-noise
simulator, IBM hardware, and Gurobi across all test systems demonstrate
the robustness of the proposed method to modeled and physical quantum noise.
In both noise-affected settings, the method retains the Gurobi-optimal cut
obtained under ideal simulation while producing feasible reconstructed
partitions. This consistency indicates that the regional decomposition and
classical reconstruction limit the influence of the tested noise on the final
islanding decision.

The circuit data further characterize the effect of hardware-aware compilation
on the quantum circuits that yield these optimal solutions. Ideal Aer circuits
generally have smaller $G$ and $d$ because they require no hardware routing. In
contrast, the noisy-simulator
and hardware two-qubit gate counts are identical for six systems and differ by
no more than eight gates in every other case. Their depths differ by at most
$26$ through the 145-bus system. For the 300-bus system, both compiled circuits
contain $220$ two-qubit gates, although their depths are $506$ and $260$. This
difference indicates that gate scheduling can change the overall depth without
increasing the two-qubit gate volume. Despite these compilation variations, the
cut values and feasibility remain unchanged, providing consistent evidence of
noise resilience across all reported systems.

\subsection{Computational Complexity}

By \eqref{eq:regional_width}, every regional register satisfies $Q_m\le Q_{\max}$,
independently of $N$. The following scaling analysis treats the island count
$K$ as fixed. On networks of bounded degree, each region contributes
$O(Q_{\max})$ quadratic terms to \eqref{eq:regional_qubo}, so a single regional
circuit of depth $p$ uses $G_{\mathrm{reg}}=O(pQ_{\max})$ two-qubit gates and
$2p$ variational parameters, both independent of $N$. Partitioning
$\mathcal V_{\mathrm F}$ into regions of size at most
$n_{\max}=\lfloor Q_{\max}/(K-1)\rfloor$ requires
$N_{\mathrm R}=O(N/Q_{\max})$ regions, so
the aggregate quantum workload per sweep is
\begin{equation}
G_{\mathrm{sweep}}=N_{\mathrm R}\,G_{\mathrm{reg}}=O(pN),
\label{eq:gate_count_sweep}
\end{equation}
linear in network size even though no individual circuit exceeds $Q_{\max}$
qubits. This replaces the monolithic state space
$|\mathcal S_{\mathrm{mono}}|=O(2^{Q_{\mathrm{sym}}})$, where
$\mathcal S_{\mathrm{mono}}$ denotes the monolithic assignment state space. The regional formulation instead uses $N_{\mathrm R}$ subproblems of
fixed size $Q_{\max}$ as $N$ grows.

On the classical side, every candidate in $\cset_m$ is checked by the DFS
connectivity test \eqref{eq:dfs_reachability} on the full network graph, costing
$O(N+|\mathcal E|)$. With pool size $B$, let
$W_{\mathrm{reg}}^{\mathrm{cl}}=O(B(N+|\mathcal E|))$ denote the per-region
classical work. The per-sweep classical work is
\begin{equation}
W_{\mathrm{sweep}}^{\mathrm{cl}}=N_{\mathrm R}\,W_{\mathrm{reg}}^{\mathrm{cl}}
=O\!\left(\frac{BN(N+|\mathcal E|)}{Q_{\max}}\right).
\label{eq:classical_cost_sweep}
\end{equation}
For sparse power networks, $|\mathcal E|=O(N)$, so
$W_{\mathrm{sweep}}^{\mathrm{cl}}=O(BN^{2}/Q_{\max})$,
and global feasibility evaluation, rather than the bounded-width quantum stage,
becomes the dominant cost as $N$ grows. If $T\le T_{\max}$ is the realized
number of sweeps, the total
resource use is
\begin{equation}
G_{\mathrm{tot}}=O(TpN),
\qquad
W_{\mathrm{tot}}^{\mathrm{cl}}=O\!\left(\frac{TBN^2}{Q_{\max}}\right),
\label{eq:total_complexity}
\end{equation}
while the quantum circuit width itself remains fixed at $Q_{\max}$ throughout.

The measured resources in Table~\ref{tab:hardware_cost} support this scaling
analysis. The largest simultaneous register $Q_{\mathrm R}$ does not exceed
$18$ qubits, while
the symmetry-encoded monolithic baseline reaches $462$ qubits and the one-hot encoding
requires $900$ qubits for the 300-bus system. Growth is absorbed by the region
count, which rises from $2$ to $39$ in agreement with the
$N_{\mathrm R}=O(N/Q_{\max})$ scaling derived above, yielding a $75\times$ width
reduction over one-hot encoding.

The exponential state space of a monolithic formulation eventually overwhelms
any fixed quantum device, capping the largest network it can ever address.
Holding every regional circuit at $Q_{\max}$ qubits by construction instead
absorbs growth in $N$ into the number of subproblems and the classical
coordination around them, so the device that solves a small network remains
sufficient for an arbitrarily large one. Decoupling the quantum resource
requirement from network size makes near-term, qubit-limited hardware viable for
networks no monolithic formulation could ever fit.

%% file: conclusion.tex
\section{Conclusion}
\label{sec:conclusion}

This paper developed a qubit-bounded sequential distributed QAOA framework for
coherent controlled islanding under limited quantum resources. It decomposes
free-bus assignments into boundary-conditioned regional QUBOs that are
optimized sequentially within a fixed qubit budget. Across eleven IEEE systems
ranging from 9 to 300 buses, the method recovered feasible partitions with
Gurobi-optimal cut values on all three sampling platforms. The identical
solution quality under ideal simulation, calibrated-noise simulation, and IBM
hardware execution demonstrates resilience to both modeled and physical
quantum noise. The regional formulation reduced the simultaneous register
width, two-qubit gate count, and circuit depth relative to monolithic QAOA
across all common test systems.
These findings demonstrate that large controlled-islanding
problems can be mapped to current quantum hardware without increasing the
quantum register with network size or degrading the reported solution quality
under the tested noise conditions. More broadly, the framework provides a
scalable pathway for advancing quantum applications in power-system
optimization.

Future work will investigate adaptive regionalization and parallel regional
execution to improve finite-sampling convergence and reduce QPU time, and
extend the framework to other quantum platforms.